\providecommand{\tabularnewline}{\\}
\crefname{section}{Section}{Sections}
\crefname{figure}{Figure}{Figures}
\crefname{table}{Table}{Tables}
\crefname{example}{Example}{Examples}
\crefname{footnote}{}{}
\makeatletter\g@addto@macro{\UrlBreaks}{\UrlOrds}\makeatother	
\let\oldsqrt\sqrt
\renewcommand{\sqrt}[2][\ \,\,]{{\!\!\oldsqrt[\raisebox{.1em}{\scalebox{.7}{$#1$}}]{#2}\,}}
  \let\originalleft\left
  \let\originalright\right
  \renewcommand{\left}{\mathopen{}\mathclose\bgroup\originalleft}
  \renewcommand{\right}{\aftergroup\egroup\originalright}
\newcommand{\setmulength}[2]{#1=#2\relax}	
\setmulength{\thinmuskip}{2mu plus 1mu minus 1mu}
\setmulength{\medmuskip}{2mu plus 1mu minus 1mu}
\setmulength{\thickmuskip}{4mu plus 1mu minus 2mu}
\DeclareSymbolFont{unspacedletters}{OT1}{ntxtlf}{m}{it}
\DeclareMathSymbol{A}{\mathalpha}{unspacedletters}{`A}
\DeclareMathSymbol{B}{\mathalpha}{unspacedletters}{`B}
\DeclareMathSymbol{C}{\mathalpha}{unspacedletters}{`C}
\DeclareMathSymbol{D}{\mathalpha}{unspacedletters}{`D}
\DeclareMathSymbol{E}{\mathalpha}{unspacedletters}{`E}
\DeclareMathSymbol{F}{\mathalpha}{unspacedletters}{`F}
\DeclareMathSymbol{G}{\mathalpha}{unspacedletters}{`G}
\DeclareMathSymbol{H}{\mathalpha}{unspacedletters}{`H}
\DeclareMathSymbol{I}{\mathalpha}{unspacedletters}{`I}
\DeclareMathSymbol{J}{\mathalpha}{unspacedletters}{`J}
\DeclareMathSymbol{K}{\mathalpha}{unspacedletters}{`K}
\DeclareMathSymbol{L}{\mathalpha}{unspacedletters}{`L}
\DeclareMathSymbol{M}{\mathalpha}{unspacedletters}{`M}
\DeclareMathSymbol{N}{\mathalpha}{unspacedletters}{`N}
\DeclareMathSymbol{O}{\mathalpha}{unspacedletters}{`O}
\DeclareMathSymbol{P}{\mathalpha}{unspacedletters}{`P}
\DeclareMathSymbol{Q}{\mathalpha}{unspacedletters}{`Q}
\DeclareMathSymbol{R}{\mathalpha}{unspacedletters}{`R}
\DeclareMathSymbol{S}{\mathalpha}{unspacedletters}{`S}
\DeclareMathSymbol{T}{\mathalpha}{unspacedletters}{`T}
\DeclareMathSymbol{U}{\mathalpha}{unspacedletters}{`U}
\DeclareMathSymbol{V}{\mathalpha}{unspacedletters}{`V}
\DeclareMathSymbol{W}{\mathalpha}{unspacedletters}{`W}
\DeclareMathSymbol{X}{\mathalpha}{unspacedletters}{`X}
\DeclareMathSymbol{Y}{\mathalpha}{unspacedletters}{`Y}
\DeclareMathSymbol{Z}{\mathalpha}{unspacedletters}{`Z}
\DeclareMathSymbol{a}{\mathalpha}{unspacedletters}{`a}
\DeclareMathSymbol{b}{\mathalpha}{unspacedletters}{`b}
\DeclareMathSymbol{c}{\mathalpha}{unspacedletters}{`c}
\DeclareMathSymbol{d}{\mathalpha}{unspacedletters}{`d}
\DeclareMathSymbol{e}{\mathalpha}{unspacedletters}{`e}
\DeclareMathSymbol{f}{\mathalpha}{unspacedletters}{`f}
\DeclareMathSymbol{g}{\mathalpha}{unspacedletters}{`g}
\DeclareMathSymbol{h}{\mathalpha}{unspacedletters}{`h}
\DeclareMathSymbol{i}{\mathalpha}{unspacedletters}{`i}
\DeclareMathSymbol{j}{\mathalpha}{unspacedletters}{`j}
\DeclareMathSymbol{k}{\mathalpha}{unspacedletters}{`k}
\DeclareMathSymbol{l}{\mathalpha}{unspacedletters}{`l}
\DeclareMathSymbol{m}{\mathalpha}{unspacedletters}{`m}
\DeclareMathSymbol{n}{\mathalpha}{unspacedletters}{`n}
\DeclareMathSymbol{o}{\mathalpha}{unspacedletters}{`o}
\DeclareMathSymbol{p}{\mathalpha}{unspacedletters}{`p}
\DeclareMathSymbol{q}{\mathalpha}{unspacedletters}{`q}
\DeclareMathSymbol{r}{\mathalpha}{unspacedletters}{`r}
\DeclareMathSymbol{s}{\mathalpha}{unspacedletters}{`s}
\DeclareMathSymbol{t}{\mathalpha}{unspacedletters}{`t}
\DeclareMathSymbol{u}{\mathalpha}{unspacedletters}{`u}
\DeclareMathSymbol{v}{\mathalpha}{unspacedletters}{`v}
\DeclareMathSymbol{w}{\mathalpha}{unspacedletters}{`w}
\DeclareMathSymbol{x}{\mathalpha}{unspacedletters}{`x}
\DeclareMathSymbol{y}{\mathalpha}{unspacedletters}{`y}
\DeclareMathSymbol{z}{\mathalpha}{unspacedletters}{`z}
\pretocmd{\section}{\addvspace{0em plus 2em}\addpenalty{-2000}}{}{}
\pretocmd{\subsection}{\addvspace{0em plus 1.5em}\addpenalty{-1500}}{}{}
\pretocmd{\subsubsection}{\addvspace{0em plus 1em}\addpenalty{-1000}}{}{}
\pretocmd{\par}{\addpenalty{-1000}}{}{}
  \renewcommand\p@enumii{\theenumi}
  \renewcommand\p@enumiii{\theenumi\theenumii}
  \renewcommand\p@enumiv{\theenumi\theenumii\theenumiii}
  \newlength{\listspace}\setlength{\listspace}{.1em plus .1em}	
  \setlist{topsep=\listspace,itemsep=\listspace,parsep=0em,partopsep=0em}
  \setlist[enumerate]{leftmargin=2em}
  \setlist[itemize]{leftmargin=1.5em}
\makeatletter\newcommand{\justified}{\rightskip\z@skip\leftskip\z@skip}\makeatother
\date{}
\newlength{\linespace}
\newcommand\makelinespace{\setlength{\linespace}{\baselineskip-1em}\vspace{\linespace}}
\newlength{\parspace}
\newcommand\makeparspace{\setlength{\parspace}{\parskip+\baselineskip-1em}\vspace{\parspace}}
\newcommand{\forcefontspace}{\par\vspace{-\baselineskip}\vphantom{ABCDEgjpqy}}	
\let\incgraphics\includegraphics
\newsavebox{\imagebox}
\newlength{\imagerule}
\newcommand{\imagescale}{1}
\newcommand{\scalegraphics}[1]{\renewcommand{\imagescale}{#1}}
\renewcommand{\includegraphics}[2][]
{%
\def\image{\scalebox{\imagescale}{\incgraphics[#1]{#2}}}%
\savebox{\imagebox}{\image}%
\setlength{\imagerule}{\ht\imagebox+\baselineskip-.7em}%
\ifvmode{\forcefontspace}\fi\rule[0em]{0em}{\imagerule}\image%
}
\let\oldfigure\figure
\let\oldtable\table
\def\beginfloat{\centering\vspace{.1em}\makeparspace}
\def\figure@i[#1]{\oldfigure[#1]\beginfloat}
\def\figure@ii{\oldfigure\beginfloat}
\def\figure{\@ifnextchar[\figure@i \figure@ii}
\def\table@i[#1]{\oldtable[#1]\beginfloat}
\def\table@ii{\oldtable\beginfloat}
\def\table{\@ifnextchar[\table@i \table@ii}
\newcommand\beforefloat{\forcefontspace\makeparspace}
\newcommand\afterfloat{\vspace{.02em}}
\newlength{\parskipcopy}\setlength{\parskipcopy}{\parskip}
\def\@minipagerestore{\setlength{\intextsep}{0em}\setlength{\parskip}{\parskipcopy}\vphantom{ABCDEgjpqy}\vspace{-\baselineskip}\vspace{-\parskip}}
\let\oldfbox\fbox
\renewcommand{\fbox}[1]{\vspace{.05em}\setlength{\fboxsep}{0em}\oldfbox{#1}\vspace{.2em}\ifvmode{\makelinespace\ensurelinespace}\fi}
\renewcommand{\boxed}[1]{\oldfbox{\m@th$#1$}}	
\newtheoremstyle{lwq}
  {0em}	
  {0em}	
  {\normalfont}	
  {0em}	
  {\bfseries}	
  {.}	
  {.3em plus .2em}	
  {\thmname{#1}\thmnumber{ #2}\thmnote{ (#3)}}	
\newtheoremstyle{lwqprf}
  {0em}	
  {0em}	
  {\normalfont}	
  {0em}	
  {\itshape}	
  {.}	
  {.3em plus .2em}	
  {\thmname{#1}\thmnote{ (#3)}}	
\newlength{\thmspace}\setlength{\thmspace}{.1em plus .1em minus .1em}
\newlength{\prfspace}\setlength{\prfspace}{.1em plus .1em minus .1em}
\newcommand\thmbegin{\par\addvspace{\thmspace}\vspace{\parskip}\addpenalty{-200}}
\newcommand\prfbegin{\par\addvspace{\prfspace}\vspace{\parskip}}
\newcommand\thmend{\par\addvspace{\thmspace}\addpenalty{-100}}	
\newcommand\prfend{\par\addvspace{\prfspace}\addpenalty{-100}}	
\newenvironment{centerbox}
{\par\begin{centering}}
{\par\end{centering}}
\newcommand{\theoremname}{Theorem}
\theoremstyle{lwq}\newtheorem{thm}{\protect\theoremname}
\newcommand{\lemmaname}{Lemma}
\theoremstyle{lwq}\newtheorem{lem}[thm]{\protect\lemmaname}
\renewcommand{\proofname}{Proof}
\theoremstyle{lwqprf}\newtheorem{prf}{\protect\proofname}
\renewenvironment{proof}[1][]{\prfbegin\begin{prf}[#1]\pushQED{\qed}}{\popQED\end{prf}\prfend}
\renewcommand{\qed}{\hfill{}\hspace{2em minus 1em}\qedsymbol}
\renewcommand{\qedsymbol}{\scalebox{1.4}{$\diamond$}}
\newcommand{\remarkname}{Remark}
\theoremstyle{lwq}
\theoremstyle{lwq}\newtheorem*{rem*}{\protect\remarkname}
\newenvironment{roundedbox}
{\forcefontspace\makeparspace\vspace{.1em}\begin{mdframed}[style=mdroundedbox]\vspace{.1em}\forcefontspace}
{\forcefontspace\end{mdframed}\unskip\vspace{-.22em}\forcefontspace}
\newenvironment{roundedboxinfloat}
{\vspace{-.3em}\begin{mdframed}[style=mdroundedboxinfloat]\vspace{.1em}\forcefontspace}
{\forcefontspace\end{mdframed}\unskip\vspace{-.24em}}
\begin{document}
\global\long\def\nn{\mathbb{N}}
\global\long\def\zz{\mathbb{Z}}
\global\long\def\qq{\mathbb{Q}}
\global\long\def\rr{\mathbb{R}}
\global\long\def\pp{\mathbb{P}}

\global\long\def\wi{\subseteq}
\global\long\def\co{\supseteq}
\global\long\def\nwi{\nsubseteq}
\global\long\def\nco{\nsupseteq}
\global\long\def\none{\varnothing}
\global\long\def\less{\smallsetminus}

\global\long\def\g{\mathbb{G}}
\global\long\def\h{\mathbb{H}}

\global\long\def\scale#1#2{\scalebox{#2}{\ensuremath{#1}}}
\global\long\def\tilt#1#2{\rotatebox{#2}{\ensuremath{#1}}}
\global\long\def\vsp#1{\vspace{#1em}}

\newcommand\br{\addpenalty{-1000}}

\def\boxheader#1{\texorpdfstring{#1}{[#1]}}
\def\thesect{\arabic{section}}
\def\thesubsect{\arabic{section}.\arabic{subsection}}
\def\thesubsubsect{\arabic{section}.\arabic{subsection}.\arabic{subsubsection}}
\def\thesection{\boxheader{\thesect}}
\def\thesubsection{\boxheader{\thesubsect}}
\def\thesubsubsection{\boxheader{\thesubsubsect}}
\titlelabel{\smash{\boxed{\textbf{\thetitle}}}\quad}

\newcommand\case{\crefname{enumi}{case}{}}

\noindent \begin{center}
\textbf{\LARGE{}Small Connected Planar Graph with 1-Cop-Move Number
4}
\par\end{center}{\LARGE \par}

\noindent \begin{center}
\begin{tabular}{>{\centering}p{0.3\textwidth}}
\textbf{\large{}Wei Quan Lim}{\large \par}

National University of Singapore\tabularnewline
\end{tabular}
\par\end{center}

\section*{Abstract}

This paper describes a $720$-vertex connected planar graph $\g$
such that $cop_{1}(\g)$, denoting the minimum number of cops needed
to catch the robber in the 1-cop-move game on $\g$, is at least $4$
and at most $7$. Furthermore, $\g$ has a connected subgraph $\h$
such that $cop_{1}(\h)$ is exactly $4$, meaning that $4$ cops are
barely sufficient to catch the robber in the 1-cop-move game on $\h$.
This is a significant improvement over the graph given by Gao and
Yang in 2017~\cite{gao2017onecopplanar}.

\section*{Acknowledgements}

I would like to thank Ziyuan Gao for introducing me to this interesting
problem in the first place, as well as for giving me very helpful
feedback on drafts of this paper. This research was partly supported
by Singapore MOE AcRF Tier 2 project MOE2018-T2-1-160.

\section{Introduction}

The abstract game of Cops and Robbers is a perfect-information 2-player
game on a graph $G$ with two other parameters $c$ and $k$. Player
$C$ (for Cops), has $c$ cops, and first places each of them at a
vertex of $G$. Player $R$ (for Robbers) then places the robber at
a (different) vertex of $G$. After that, $C$ and $R$ take turns
to make a move. On $R$'s turn, $R$ may move the robber by $1$ step,
namely from its current vertex along an edge to a neighbouring vertex.
On $C$'s turn, $C$ may move up to $k$ cops, each by $1$ step.
The classical variant where $k=c$ was introduced decades ago~\cite{aigner1984copnumberplanar},
whereas the variant where $k=1$ has been the subject of mathematical
study only in the past few years~\cite{offner2014copvariants,bal2015onecophypercube,yang2015onecoptime,bal2016onecopsurfacegraph}.
In general, this game is called the \textbf{$k$-cop-move game} with
$c$ cops on $G$. If eventually some cop moves to the same vertex
as the robber, then $C$ wins, otherwise $R$ wins. (As defined here,
neither cops nor robbers are not forced to move on each turn. For
some other variants see \cite{bonato2011copvariants,frieze2012copfastrobber,offner2014copvariants}.)

A natural question is, how many cops are needed to catch the robber
on a given graph? Specifically, the classical \textbf{cop number}
for $G$, denoted by $\boldsymbol{cop}(G)$, is the minimum $c$ such
that player $C$ (Cops) wins (i.e.~has a winning strategy for) the
$c$-cop-move game with $c$ cops on $G$. And the \textbf{$k$-cop-move
number} for $G$, denoted by $\boldsymbol{cop}_{k}(G)$, is the minimum
$c$ such that $C$ wins the $k$-cop-move game with $c$ cops on
$G$. The class of graphs with cop number $c$ has been characterized
for $c=1$ by Nowakowski and Winkler~\cite{nowakowski1983copnumberone}
and for general $c$ by Clarke and MacGillivray~\cite{clarke2012copnumberk}.
It is also natural to ask whether the cop number is bounded for the
class $\boldsymbol{\pp}$ of \textbf{\textit{finite connected planar
graphs}}, since the edge connections in a planar graphs are in some
sense local. Indeed, Aigner and Frommel showed that $cop(G)\le3$
for every graph $G$ in $\pp$. In contrast, much less is known about
the $1$-cop-move game for $\pp$~\cite{bonato2016onecopconjectures}.
Although Bal et~al.~\cite{bal2016onecopsurfacegraph} did show that
$cop_{1}(G)\in O\left(\sqrt{n}\right)$ for every graph $G$ in $\pp$
with $n$ vertices, it is conjectured that there is in fact a fixed
upper bound on $cop_{1}(G)$ for every graph $G$ in $\pp$, but this
remains unproven.

Recently, Gao and Yang constructed a graph $D$ with $cop_{1}(D)>3$~\cite{gao2017onecopplanar},
settling the question of whether there is even such a graph, which
was posed as an open problem by Sullivan et al.~\cite{sullivan2016onecopnumber3smallest}.
However, they were unable to prove their conjecture that $cop_{1}(D)=4$,
nor were they able to find a simpler construction. $D$ is constructed
by replacing each face of a dodecahedron with a number of nested pentagonal
layers, where the $k$-th layer from the centre has $20\cdot(k+1)$
vertices. In their paper, they used $49$ layers (in each face), resulting
in more than $280000$ vertices in $D$. It seems that, although the
number of layers can be reduced without essentially affecting their
solution, the resulting graph is likely to still have more than $10000$
vertices.

This paper provides an improved answer to that problem, namely a much
smaller graph $\g$ with merely $720$ vertices and a proof that $4\le cop_{1}(\g)\le7$,
as well as a connected subgraph $\h$ of $\g$ with $cop_{1}(\h)=4$.
It is hoped that the techniques used here, while somewhat ad-hoc,
will be helpful in figuring out the answer to the (still-open) question
of whether there is a graph with $1$-cop-move number $5$ or even
larger.

\section{The Construction}

\scalegraphics{.455}

\begin{minipage}[t]{0.49\columnwidth}%
To build the desired graph $\g$, we first start from the truncated
icosahedron $B$ (a.k.a.~the soccer ball graph) with $12$ pentagonal
faces and $20$ hexagonal faces, and retain its vertices but replace
its faces as depicted in the diagram on the right for one pentagonal
face and three of its neighbouring hexagonal faces.

The blue vertices (each with degree $6$ in $\g$) are the vertices
of $B$, and the black vertices are added vertices. Note that $\g$
has the same symmetries (i.e.~automorphism group) as $B$. There
are $15$ black vertices added to each pentagonal face, and $24$
black vertices added to each hexagonal faces, and $60$ blue vertices
in total, and so $\g$ has $15\cdot12+24\cdot20+60=720$ vertices
in all.%
\end{minipage}\hfill{}%
\begin{minipage}[t]{0.49\columnwidth}%
\begin{centerbox}
\begin{roundedbox}
\begin{centerbox}
\includegraphics[bb=14bp 14bp 490bp 343bp,clip]{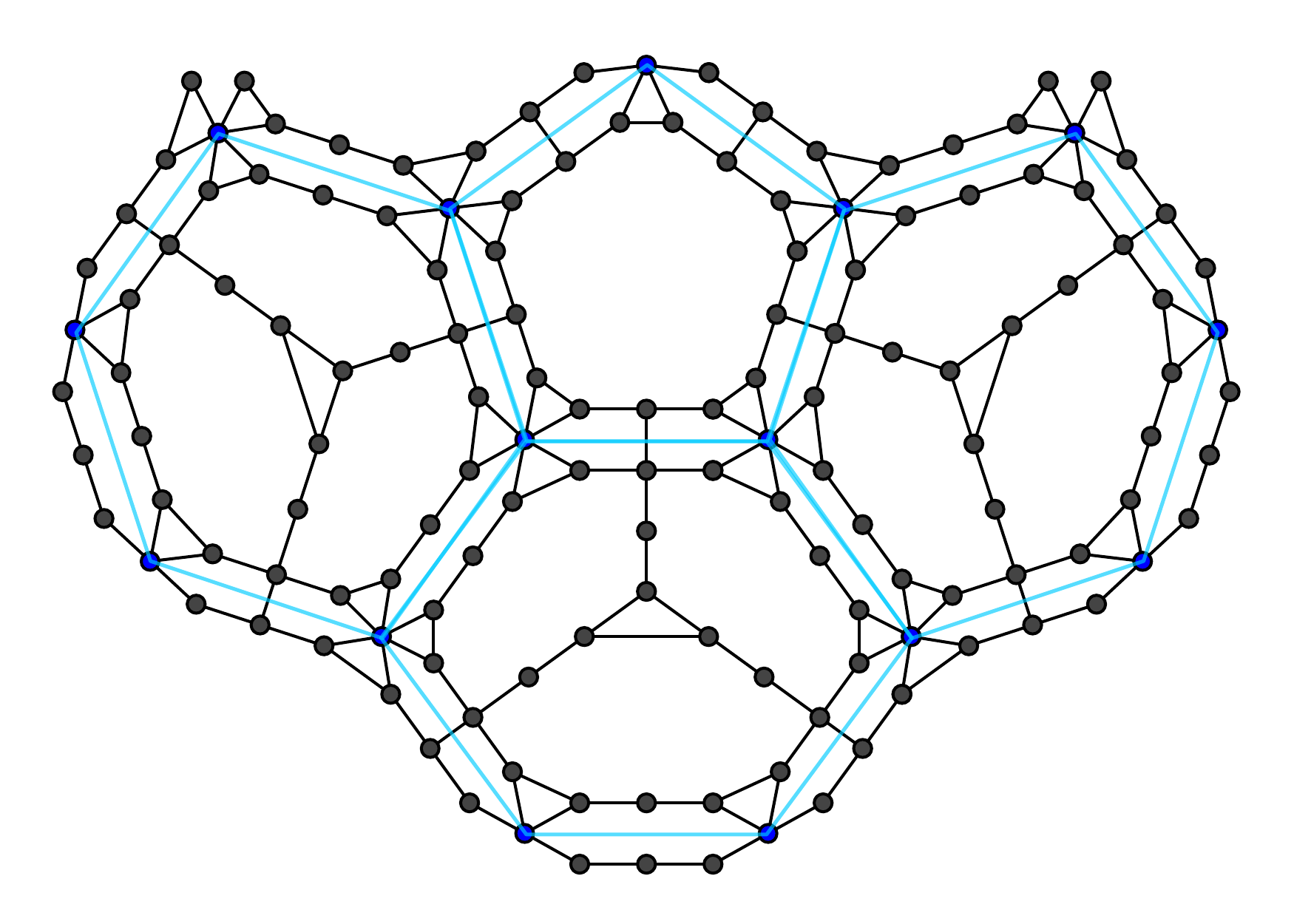}\end{centerbox}
\end{roundedbox}
\end{centerbox}
\end{minipage}

\section{The Robber Evades 3 Cops}

We shall now establish that $cop_{1}(\g)>3$, by explaining the winning
strategy for player $R$ (Robber). The general idea is for the robber
to stick to the \textbf{key vertices}, defined as the vertices of
$B$ (blue in the diagram), and move safely to another key vertex
whenever a cop gets too close, where \textbf{moving safely} to a vertex
$t$ means to move to $t$ in such a way that the cops cannot catch
the robber along the way and no cop is next to $t$ when the robber
reaches $t$. We shall use the following easy lemma throughout the
analysis.
\begin{lem}[Nearness Lemma]
\label{lem:nearness} On the robber's turn, if the robber is nearer
to a key vertex $t$ than any cop, then the robber can move safely
to $t$ by following any shortest path to $t$ without stopping.\end{lem}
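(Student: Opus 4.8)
The plan is to track, turn by turn, the distance from each cop to $t$ and show it can never fall fast enough for a cop to intercept the robber. Write $m = d(r,t)$ for the robber's current distance to $t$, and fix a shortest path $r = v_0, v_1, \ldots, v_m = t$; the robber's proposed strategy is to walk along it, moving from $v_i$ to $v_{i+1}$ on each of its turns and never pausing. Since the robber is strictly closer to $t$ than every cop and graph distances are integers, at the start of the robber's turn every cop is at distance at least $m+1$ from $t$.

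The engine of the argument is the trivial observation that on a cops' turn each cop moves at most one step, so its distance to $t$ can decrease by at most $1$. I would record this as an invariant proved by induction on $i$: immediately before the robber's $(i+1)$-st move of the traversal, for $0 \le i \le m-1$, the robber sits at $v_i$ and every cop is at distance at least $m+1-i$ from $t$. The base case $i=0$ is the hypothesis. For the inductive step, the robber first moves $v_i \to v_{i+1}$: since every cop is at distance $\ge m+1-i > m-1-i = d(v_{i+1},t)$, no cop occupies $v_{i+1}$, so the robber is not caught on its move. The cops then reply, after which each cop is at distance $\ge m-i = m+1-(i+1)$ from $t$, which still strictly exceeds $d(v_{i+1},t) = m-1-i$, so no cop can have stepped onto $v_{i+1}$ either. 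When $i+1 \le m-1$ this is exactly the invariant for $i+1$.

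It remains to handle the moment the robber reaches $t$. When the robber makes its $m$-th move, from $v_{m-1}$ to $v_m = t$, the cops have had only $m-1$ turns so far, so by the invariant (case $i = m-1$) every cop is still at distance at least $m+1-(m-1) = 2$ from $t$; the robber's move does not change this. Hence $t$ is unoccupied, so the robber is not caught as it arrives, and no cop is adjacent to $t$ — which is precisely what ``moving safely to $t$'' requires. Together with the previous paragraph, the robber is never caught during the walk and finishes with no cop next to $t$, which proves the lemma.

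I expect no genuine obstacle here — the paper rightly calls this an easy lemma — but the one point deserving care is the turn-counting: because the robber moves first and does not stop, only $m-1$ cops' turns elapse before the robber lands on $t$, and it is exactly this that yields the distance-$\ge 2$ bound ruling out an adjacent cop at the end. (Were the robber allowed to dawdle along the path, one extra cops' turn could erode the margin and the conclusion could fail.)
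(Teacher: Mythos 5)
Your proof is correct and is essentially the paper's argument: both rest on the fact that each cop starts strictly farther from $t$ than the robber and can close the gap by at most one per turn, so by the triangle inequality it can never reach (or become adjacent to) the robber's current vertex in time. You merely package this as an explicit turn-by-turn induction on the cops' distance to $t$, where the paper applies the inequality $d(u,w)\ge d(u,t)-d(w,t)>d_{P}(v,w)$ directly to each vertex $w$ of the path.
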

\begin{proof}
Take any shortest path $P$ from the robber's starting vertex $v$
to $t$. For each vertex $w$ on $P$, just after the robber reaches
$w$, no cop can reach $w$ immediately after that, since its starting
vertex $u$ is further from $t$ than $v$ and so $d_{\g}(u,w)\ge d_{\g}(u,t)-d_{\g}(w,t)>d_{\g}(v,t)-d_{\g}(w,t)=d_{P}(v,w)$.
\end{proof}

Specifically, after the cops are placed, $R$ places the robber at
a key vertex that has no cop at or next to it (which is always possible
since each cop can be at or next to at most one key vertex), and then
over subsequent turns $R$ repeats the following indefinitely:

\begin{minipage}[t]{0.49\columnwidth}%
\vspace{\parskip}
\begin{enumerate}
\item \textbf{Stay phase}: Stay at the key vertex $v$ (i.e.~do not move
the robber) until a cop moves to a vertex $w$ adjacent to $v$. By
symmetry, there are essentially $3$ possible positions for $w$ relative
to $v$ as depicted in \ref{fig:cases}, and no cop is at any other
neighbour of $v$.
\item \textbf{Travel phase}: Let $A_{1},A_{2},A_{3}$ be the red-dotted
regions in \ref{fig:cases} on the lower-left, upper-left, and right
respectively. Each $A_{i}$ encloses vertices within $4$ steps from
some key vertex $v_{i}$ that is $4$ steps away from $v$, except
$v$ and some neighbours of $v$. There are two possible situations:

\begin{itemize}
\item There is exactly one cop in each of $A_{1},A_{2},A_{3}$.
\item There is no cop in some of $A_{1},A_{2},A_{3}$.
\end{itemize}
\item [] In either situation, it is possible to move safely to some key
vertex, as we shall show subsequently.\end{enumerate}
\end{minipage}\hfill{}%
\begin{minipage}[t]{0.49\columnwidth}%
\begin{figure}[H]
\begin{centerbox}
\begin{roundedboxinfloat}
\begin{centerbox}
\includegraphics[bb=14bp 14bp 490bp 343bp,clip]{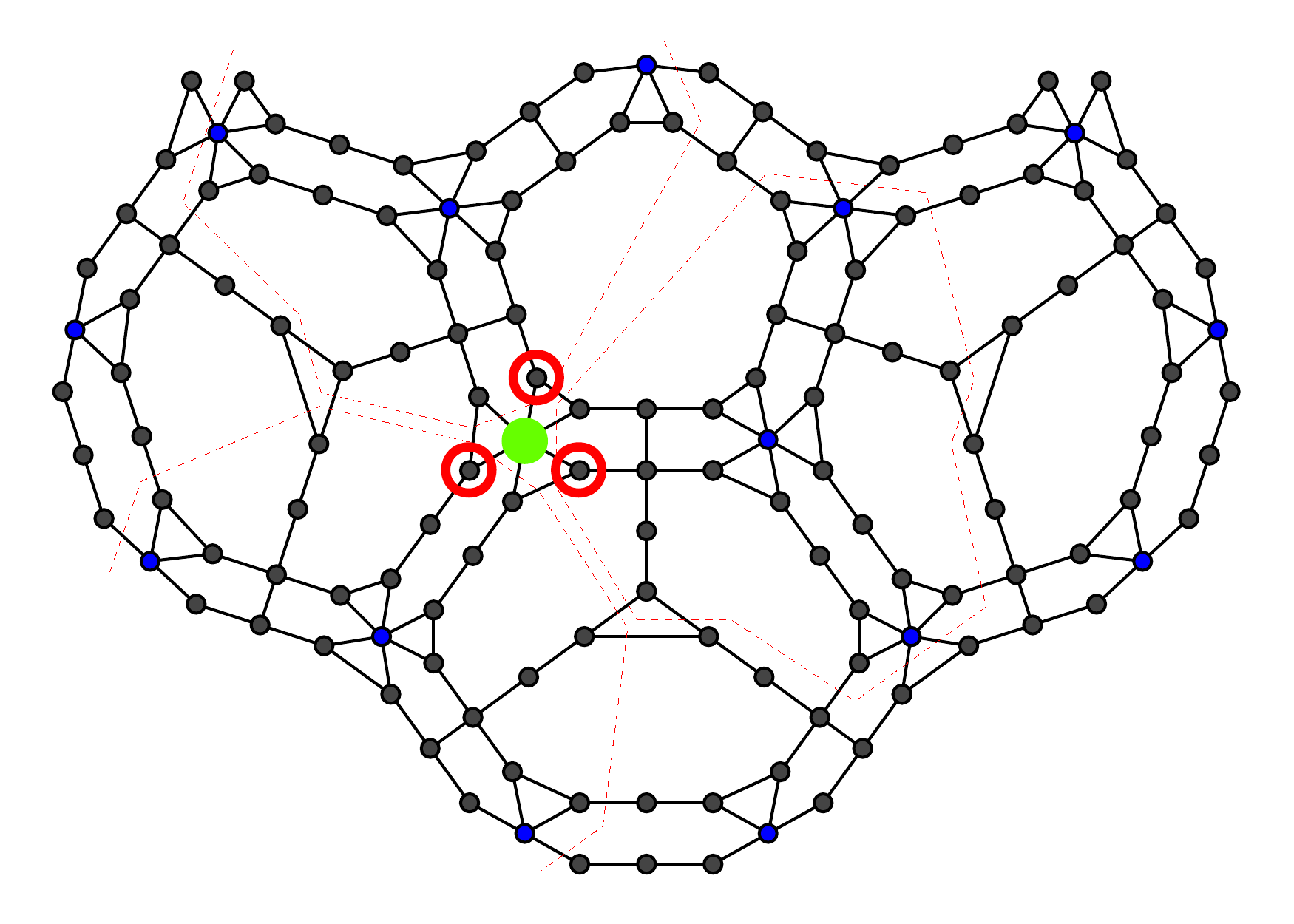}
\end{centerbox}
\caption{\label{fig:cases} At the start of the travel phase, the robber (solid
green circle) is at a key vertex $v$ and there is a cop at exactly
one neighbour $w$ of $v$, and there are $3$ possible positions
of $w$ (red-circled vertices).}
\end{roundedboxinfloat}
\end{centerbox}
\end{figure}
\end{minipage}

\subsection{One cop in each region}

We first deal with the situation where there is exactly one cop in
each of $A_{1},A_{2},A_{3}$. As noted earlier, there are essentially
$3$ cases for $w$ (and no cop is at any other neighbour of $v$):
\begin{enumerate}
\item $w$ is in $A_{2}$ and $1$ step away from $A_{3}$.
\item $w$ is in $A_{1}$ and $1$ step away from $A_{2}$.
\item $w$ is in $A_{3}$ and $1$ step away from $A_{1}$.
\end{enumerate}

\subsubsection{Case 1}

\label{sub:even1}

\begin{minipage}[t]{0.49\columnwidth}%
If the cop in $A_{1}$ is not at the X-marked vertex $x_{1}$ in the
diagram on the right, then the robber can use the green path to move
safely to one of the green-circled key vertices $t,v_{2}$ ($t$ is
the one on the left) or back to $v$.

More precisely, after the robber takes the first step along the green
path, if the cop at $w$ immediately starts moving along the red path,
then the robber can move back to $v$. Otherwise, the robber can continue
along the green path, and by the \nameref{lem:nearness} the cop that
was at $w$ must follow along the red path to guard $v_{2}$ (i.e.~prevent
the robber from moving safely to $v_{2}$), after which the robber
would be $4$ steps away from $t$ and the cops would all still be
at least $5$ steps away from $t$, since every cop in $A_{1}$ and
$A_{3}$ was initially at least $6$ steps away from $t$, so by the
\nameref{lem:nearness} the robber can move safely to $t$.%
\end{minipage}\hfill{}%
\begin{minipage}[t]{0.49\columnwidth}%
\begin{centerbox}
\begin{roundedbox}
\begin{centerbox}
\includegraphics[bb=14bp 14bp 490bp 343bp,clip]{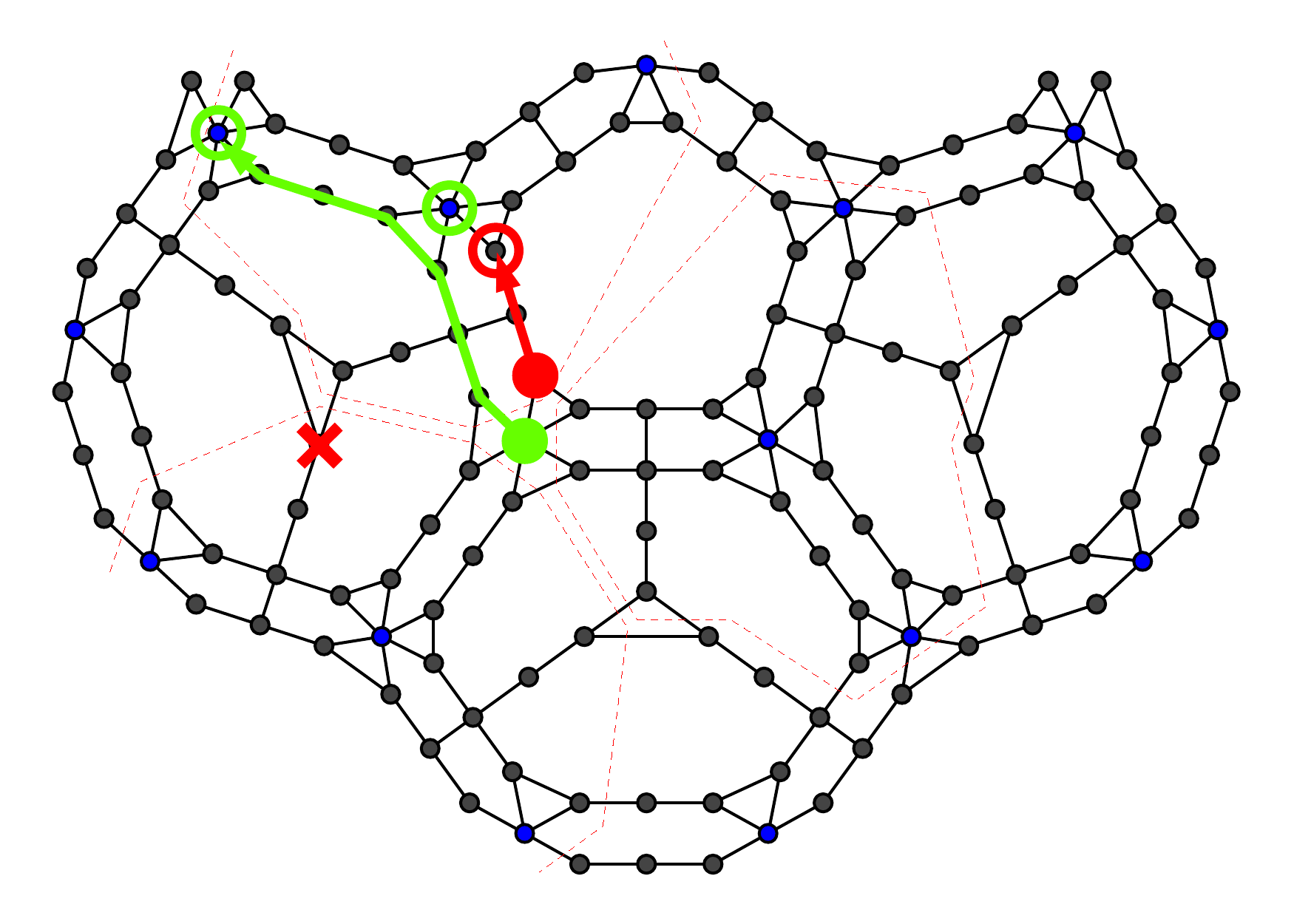}\end{centerbox}
\end{roundedbox}
\end{centerbox}
\end{minipage}

\hfill{} \dotfill{} \hfill{}

\begin{minipage}[t]{0.49\columnwidth}%
But if the cop in $A_{1}$ is at $x_{1}$, then the robber can instead
use the green path as shown on the right to move safely to one of
the green-circled key vertices $u,v_{1}$ ($u$ is the one on the
bottom).

More precisely, when the robber takes the first $3$ steps along the
green path, the cop that was at $x_{1}$ must follow along the red
path to guard $v_{1}$ by the \nameref{lem:nearness}, after which
the robber would be $4$ steps away from $u$ and the cops would all
still be at least $5$ steps away from $u$, so by the \nameref{lem:nearness}
the robber can move safely to $u$.%
\end{minipage}\hfill{}%
\begin{minipage}[t]{0.49\columnwidth}%
\begin{centerbox}
\begin{roundedbox}
\begin{centerbox}
\includegraphics[bb=14bp 14bp 490bp 343bp,clip]{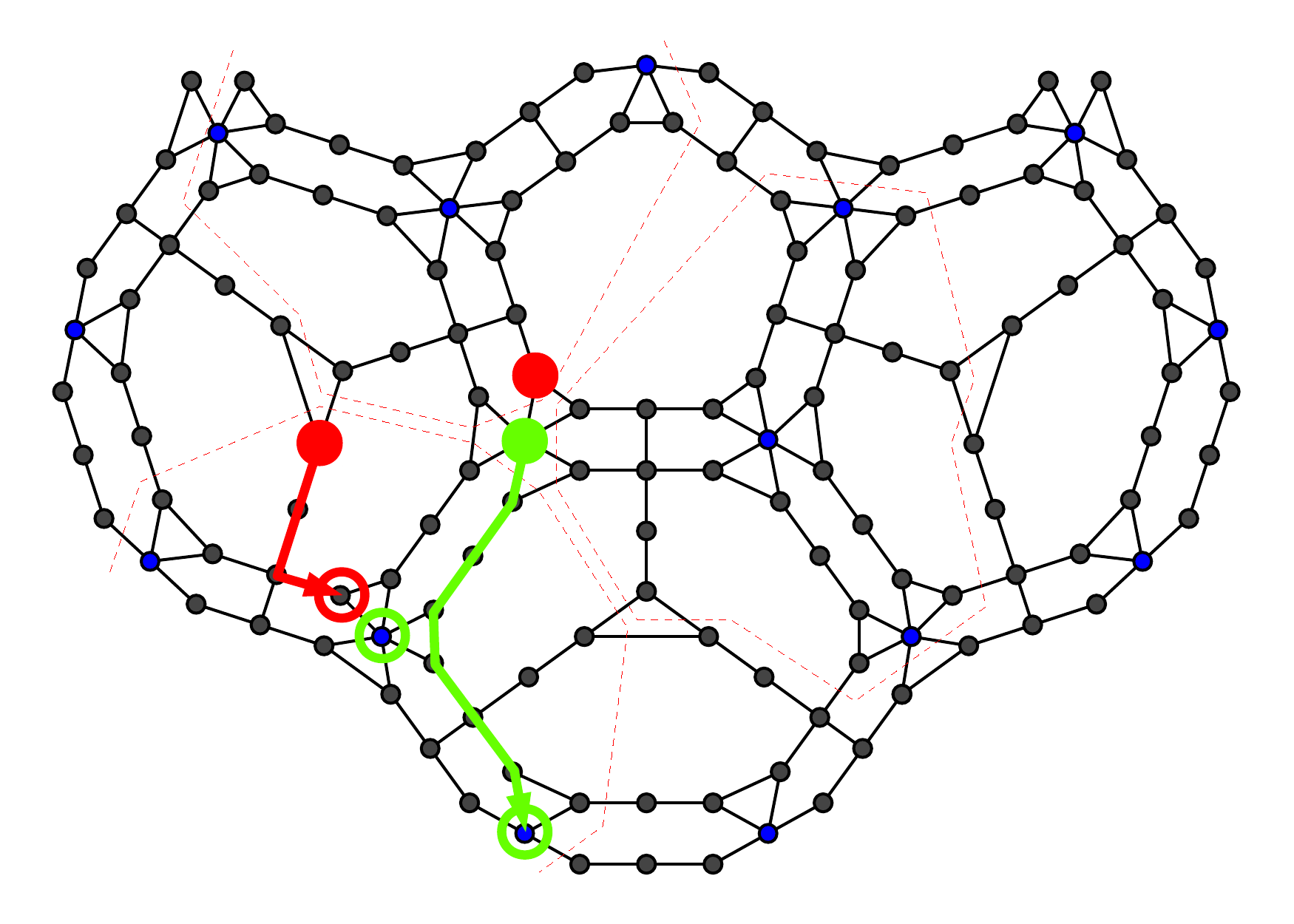}\end{centerbox}
\end{roundedbox}
\end{centerbox}
\end{minipage}

\clearpage{}

\subsubsection{Case 2}

\label{sub:even2}

\begin{minipage}[t]{0.49\columnwidth}%
If the cop in $A_{3}$ is not at the X-marked vertex $x_{3}$ in the
diagram on the right, then (exactly like in case 1a) the robber can
use the green path to move safely to one of the green-circled key
vertices $u,v_{1}$ ($u$ is the one on the bottom) or back to $v$.

More precisely, after the robber takes the first step along the green
path, if the cop at $w$ immediately starts moving along the red path,
then the robber can move back to $v$. Otherwise, the robber can continue
along the green path, and by the \nameref{lem:nearness} the cop that
was at $w$ must follow along the red path to guard $v_{1}$, after
which the robber would be $4$ steps away from $u$ and the cops would
all still be at least $5$ steps away from $u$, since every cop in
$A_{2}$ and $A_{3}$ was initially at least $6$ steps away from
$t$, so by the \nameref{lem:nearness} the robber can move safely
to $u$.%
\end{minipage}\hfill{}%
\begin{minipage}[t]{0.49\columnwidth}%
\begin{centerbox}
\begin{roundedbox}
\begin{centerbox}
\includegraphics[bb=14bp 14bp 490bp 343bp,clip]{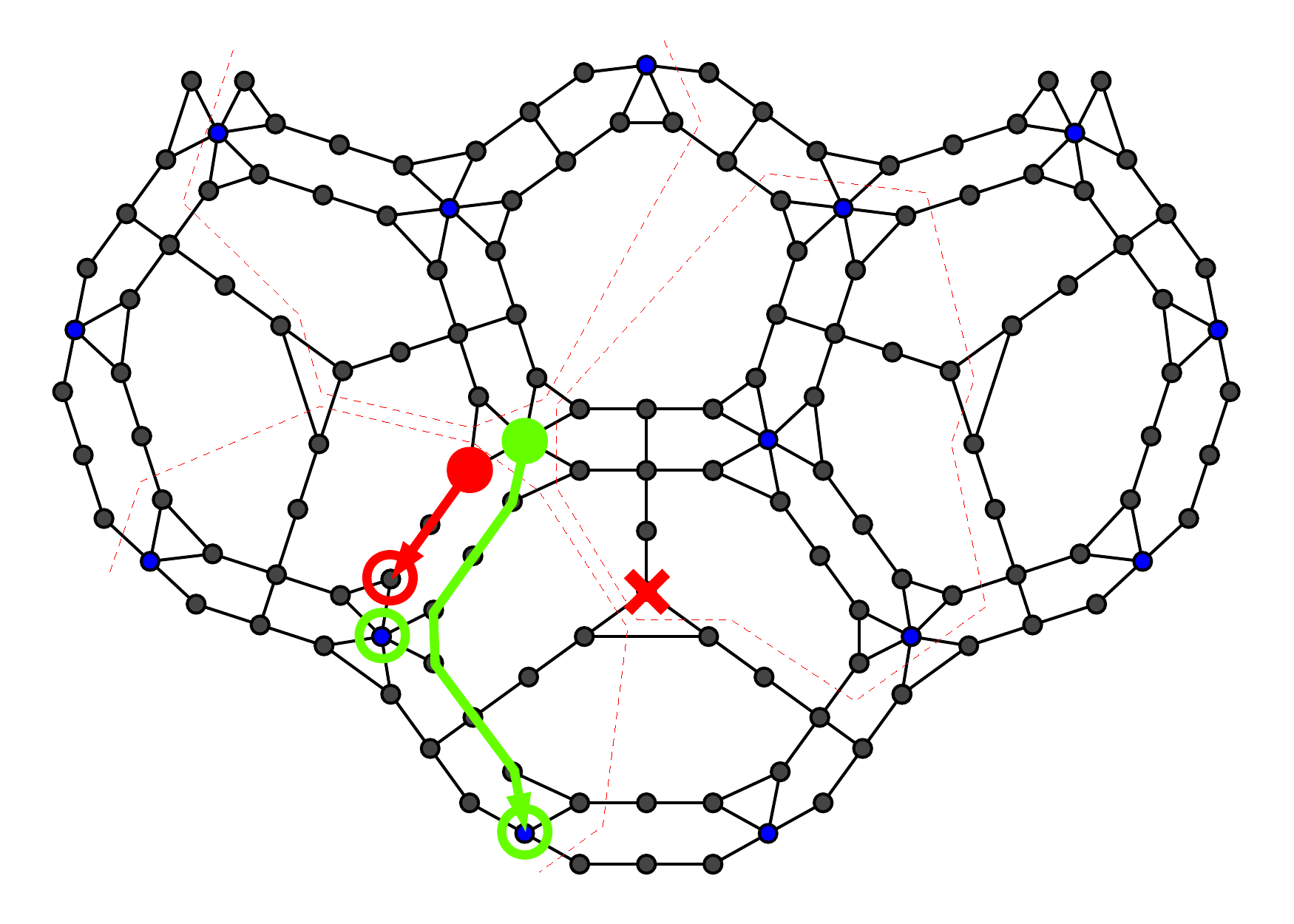}\end{centerbox}
\end{roundedbox}
\end{centerbox}
\end{minipage}

\hfill{} \dotfill{} \hfill{}

\begin{minipage}[t]{0.49\columnwidth}%
But if the cop in $A_{3}$ is at $x_{3}$, then the robber can instead
use the green path as shown on the right to move safely to the green-circled
vertex $m$, after which \textbf{\textit{either}} the robber can move
safely to $v_{3}$, \textbf{\textit{or}} the cop that was initially
at $x_{3}$ must next move to the red-circled vertex, and the other
cops are still in $A_{1}\cup A_{2}$ (and go to \nameref{sub:even4}).

More precisely, when the robber moves along the green path, by the
\nameref{lem:nearness} the cop that was at $x_{3}$ must follow along
the red path in order to guard $v_{3}$, during which no other cop
can move.%
\end{minipage}\hfill{}%
\begin{minipage}[t]{0.49\columnwidth}%
\begin{centerbox}
\begin{roundedbox}
\begin{centerbox}
\includegraphics[bb=14bp 14bp 490bp 343bp,clip]{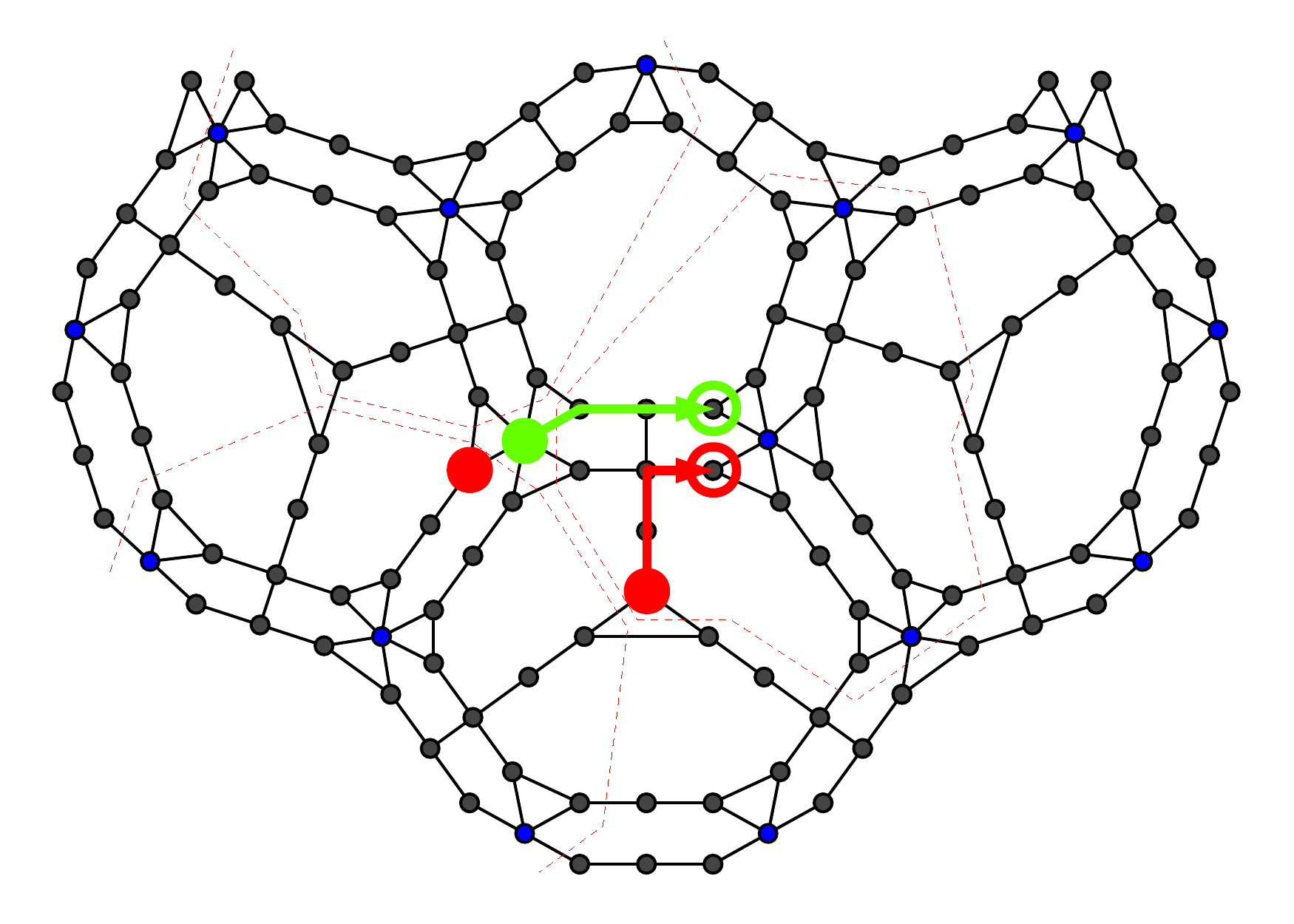}\end{centerbox}
\end{roundedbox}
\end{centerbox}
\end{minipage}

\subsubsection{Case 3}

\begin{minipage}[t]{0.49\columnwidth}%
If the cop in $A_{2}$ is not at the X-marked vertex $x_{2}$ in the
diagram on the right, then the robber can use the green path to move
safely to the green-circled vertex $m$ or back to $v$. And in the
former case, \textbf{\textit{either}} the robber can move safely to
$v_{3}$, \textbf{\textit{or}} the cop that was initially at $x_{2}$
must next move to the red-circled vertex, and the other cops are at
most $1$ step outside $A_{1}\cup A_{2}\less\{x_{2}\}$ (and go to
\nameref{sub:even4}).

More precisely, after the robber takes the first step along the green
path, if the cop at $w$ immediately starts moving along the red path,
then the robber can move back to $v$. Otherwise, the robber can continue
along the green path, and by the \nameref{lem:nearness} the cop that
was at $w$ must follow along the red path in order to guard $v_{3}$,
during which no other cop can move.%
\end{minipage}\hfill{}%
\begin{minipage}[t]{0.49\columnwidth}%
\begin{centerbox}
\begin{roundedbox}
\begin{centerbox}
\includegraphics[bb=14bp 14bp 490bp 343bp,clip]{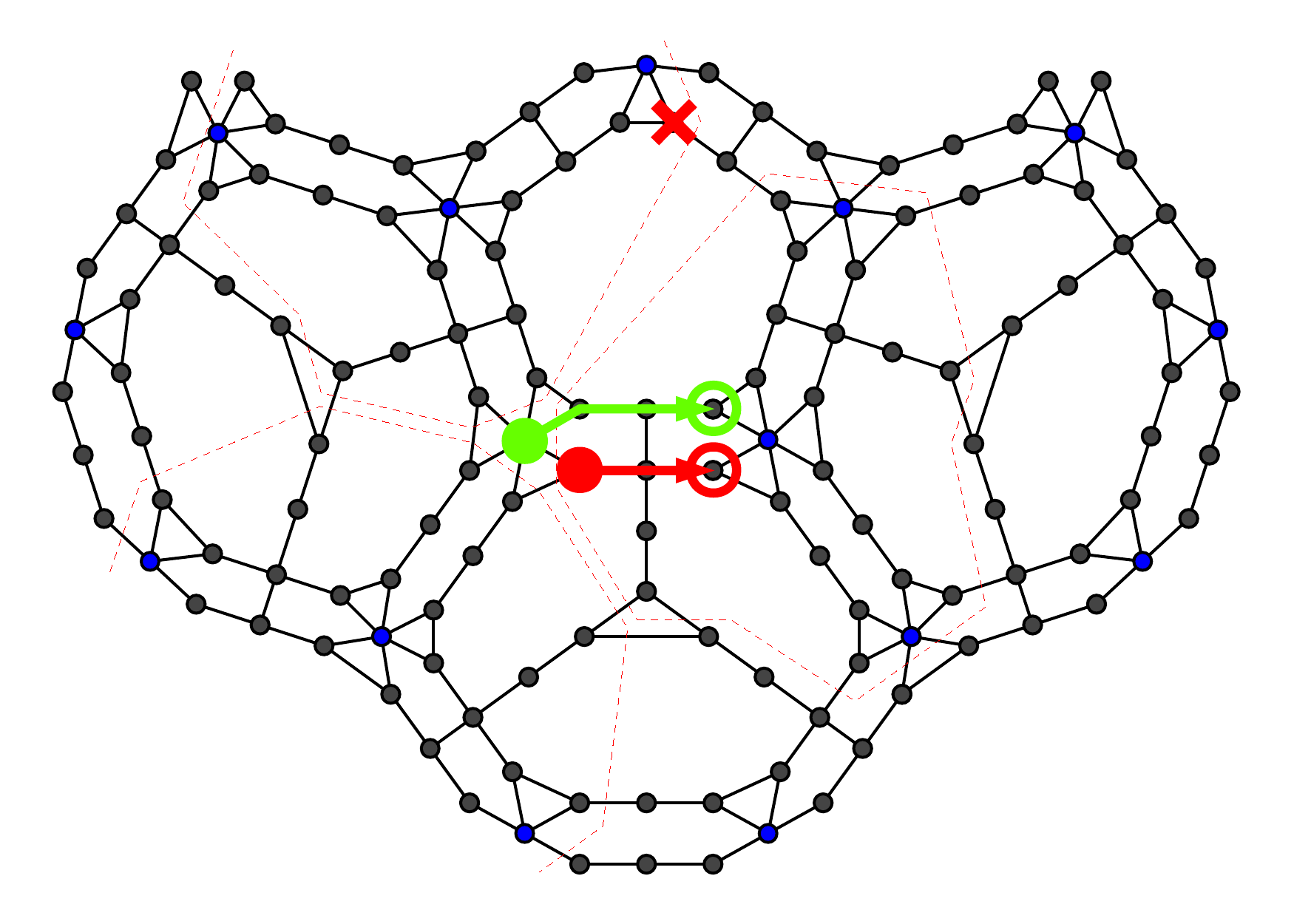}\end{centerbox}
\end{roundedbox}
\end{centerbox}
\end{minipage}

\hfill{} \dotfill{} \hfill{}

\begin{minipage}[t]{0.49\columnwidth}%
But if the cop in $A_{2}$ is at $x_{2}$, then (essentially like
in case 1b) the robber can instead use the green path as shown on
the right to move safely to one of the green-circled key vertices
$t,v_{2}$ ($t$ is the one on the left).

More precisely, when the robber takes the first $3$ steps along the
green path, the cop that was at $x_{2}$ must follow along the red
path to guard $v_{2}$ by the \nameref{lem:nearness}, after which
the robber would be $4$ steps away from $t$ and the cops would all
still be at least $5$ steps away from $t$, so by the \nameref{lem:nearness}
the robber can move safely to $t$.%
\end{minipage}\hfill{}%
\begin{minipage}[t]{0.49\columnwidth}%
\begin{centerbox}
\begin{roundedbox}
\begin{centerbox}
\includegraphics[bb=14bp 14bp 490bp 343bp,clip]{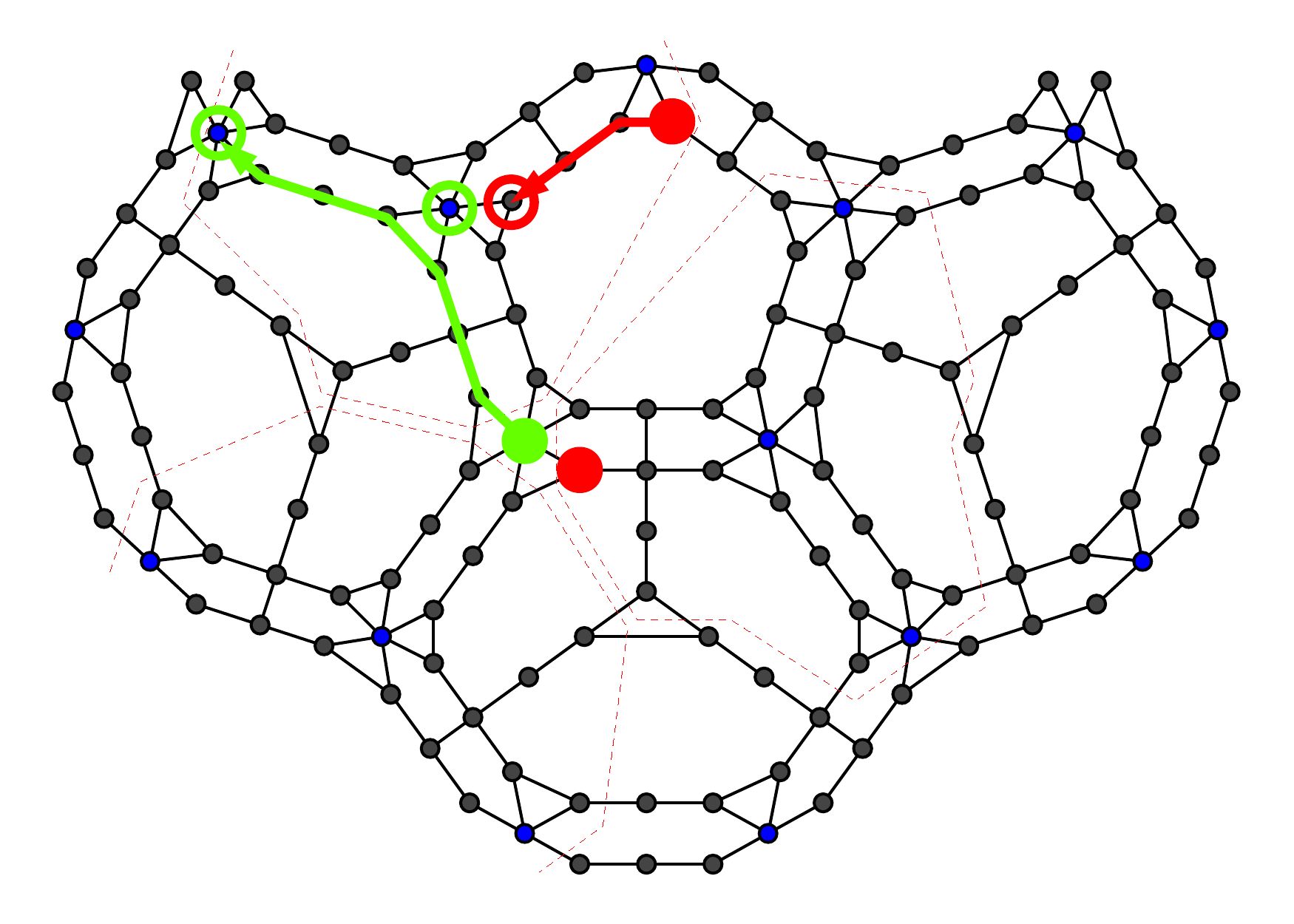}\end{centerbox}
\end{roundedbox}
\end{centerbox}
\end{minipage}

\subsubsection{Case 4}

\label{sub:even4}

\begin{minipage}[t]{0.49\columnwidth}%
The two unfinished cases above (i.e.~2b and 3a) can be handled in
the same way. The robber is now at vertex $m$ as shown on the right
(solid green circle) and is next to move. One cop is at a nearby vertex
$n$ (solid red circle), and the other two cops are each at a vertex
in one of the two red-dotted regions. From here, the robber can move
safely to one of the green-circled key vertices $v_{4},v_{5},v_{6}$
(named in clockwise order around the `hexagon' from the top-left).

To establish this, first observe that the robber can move safely along
the green path to the thin-green-circled vertex. After that, the cop
that was in $A_{2}$ must within the next move get to within $2$
steps from $v_{4}$ (i.e.~reach or pass a thin-red-circled vertex)
in order to guard it by the \nameref{lem:nearness}.

So if the robber cannot reach $v_{4}$ safely, the cop that was at
$n$ can move at most $1$ step so far, and hence the robber can continue
moving safely along the green path to the dotted-green-circled vertex.
At this point, the robber is only $5$ steps away from $v_{5}$, so
the cop from $A_{2}$ must within the next move get to within $5$
steps from $v_{5}$ (i.e.~reach a dotted-red-circled vertex) in order
to guard it by the \nameref{lem:nearness}, and hence must have taken
at least $5$ steps. But if the cop from $A_{2}$ does move in this
manner, then no other cop can have moved so far, and hence the robber
can safely move along the rest of the green path to $v_{6}$ by the
\nameref{lem:nearness}.%
\end{minipage}\hfill{}%
\begin{minipage}[t]{0.49\columnwidth}%
\begin{centerbox}
\begin{roundedbox}
\begin{centerbox}
\includegraphics[bb=14bp 14bp 490bp 343bp,clip]{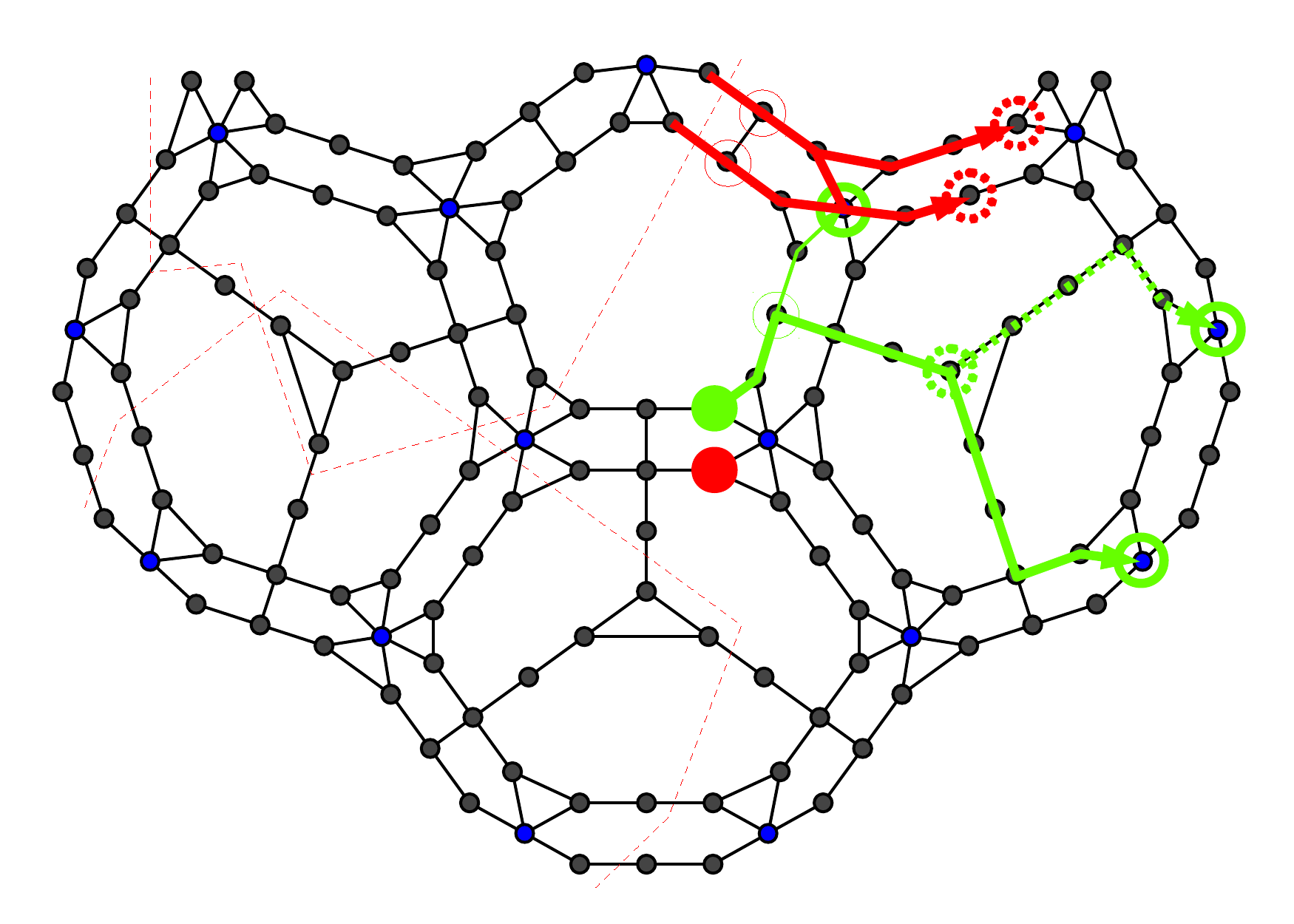}\end{centerbox}
\end{roundedbox}
\end{centerbox}
\end{minipage}

\subsection{No cop in some region}

We finally deal with the situation where there is no cop in $A_{i}$
for some $i\in\{1,2,3\}$. By symmetry, and since the cop at $w$
must be within $4$ steps from $v_{i}$ in order to guard $v_{i}$
by the \nameref{lem:nearness}, we only need to consider $3$ cases:
\begin{enumerate}
\item $A_{3}$ has no cop, and $w$ is in $A_{2}$ and just outside $A_{3}$.
\item $A_{3}$ has no cop, and $w$ is in $A_{1}$ and just outside $A_{3}$.
\item $A_{1}$ has no cop, and $w$ is in $A_{2}$ and just outside $A_{1}$.
\end{enumerate}

\subsubsection{Case 0}

\begin{minipage}[t]{0.49\columnwidth}%
Before analyzing those $3$ cases, we shall show how to handle a common
subcase. Here we assume case $1$, but it is essentially the same
in the other cases.

If no cop is exactly $2$ steps from $v$, then the robber can oscillate
between the green-circled vertices (see right) after moving to the
nearest one, as long as the cop that was at $w$ also oscillates between
the red-circled vertices. If the cops deviate from this, the robber
can thereafter move safely to either $v$ or $v_{3}$ by the \nameref{lem:nearness}.

Henceforth in all the $3$ subsequent cases we can assume that some
cop is exactly $2$ steps from $v$.%
\end{minipage}\hfill{}%
\begin{minipage}[t]{0.49\columnwidth}%
\begin{centerbox}
\begin{roundedbox}
\begin{centerbox}
\includegraphics[bb=14bp 14bp 490bp 343bp,clip]{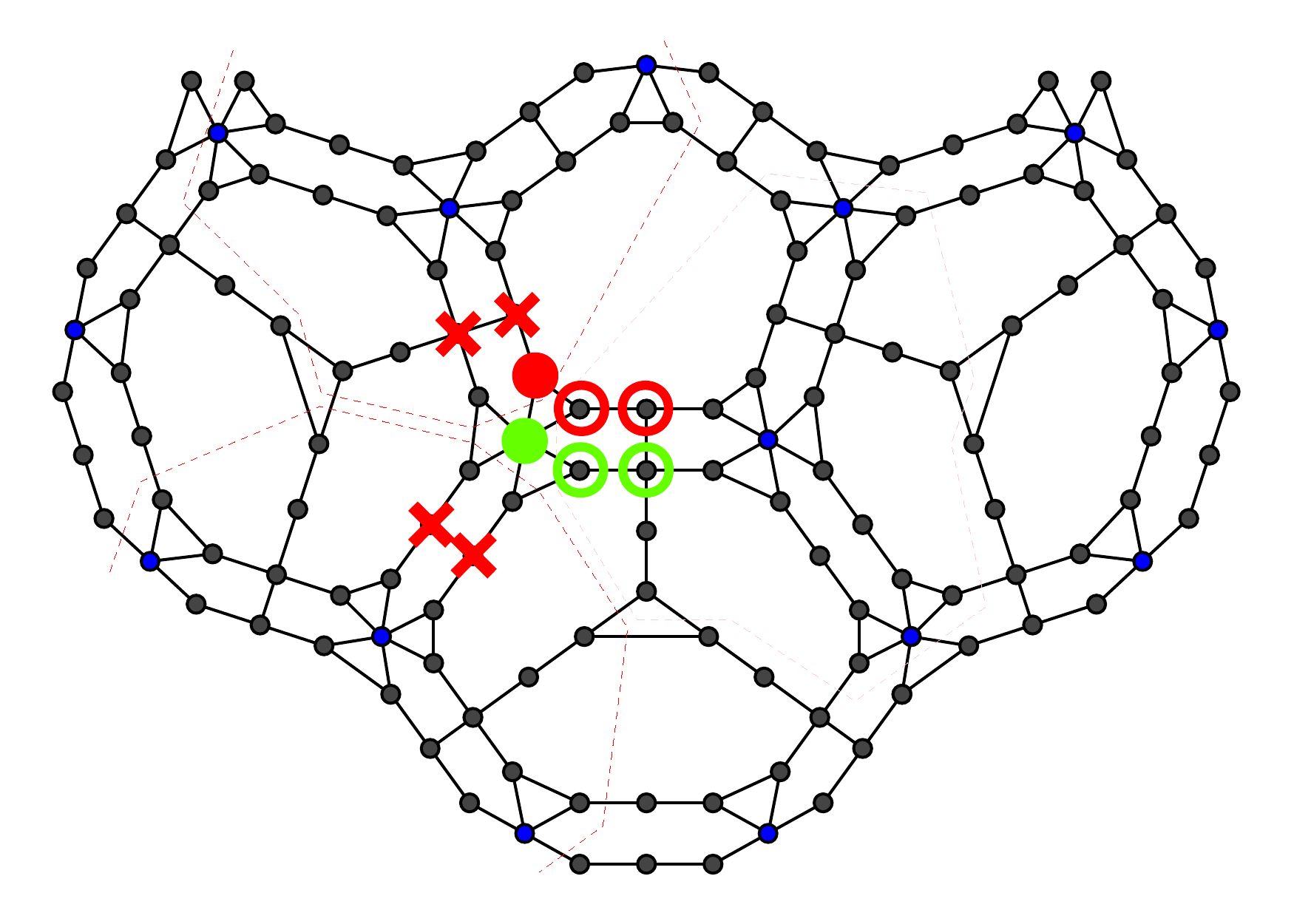}\end{centerbox}
\end{roundedbox}
\end{centerbox}
\end{minipage}

\subsubsection{Case 1}

\begin{minipage}[t]{0.49\columnwidth}%
By the \nameref{lem:nearness} there must be a cop in $A_{1}$, since
the cop at $w$ cannot guard $v_{1}$. Moreover, there must be a cop
in the yellow-dotted region $A_{4}$ (see right), which encloses vertices
outside $A_{3}$ that are within $4$ steps from the key vertex $v_{7}$
at the end of the green path, otherwise the robber can use the green
path to move safely to either $v_{3}$ or $v_{7}$ by the \nameref{lem:nearness},
since when the robber takes the first $3$ steps along the green path,
the cop that was at $w$ must follow along the red path to guard $v_{3}$,
after which the robber is only $4$ steps away from $v_{7}$.

Henceforth for the rest of this case we can assume that there is exactly
one cop in each of $A_{1},A_{2},A_{4}$.%
\end{minipage}\hfill{}%
\begin{minipage}[t]{0.49\columnwidth}%
\begin{centerbox}
\begin{roundedbox}
\begin{centerbox}
\includegraphics[bb=14bp 14bp 490bp 343bp,clip]{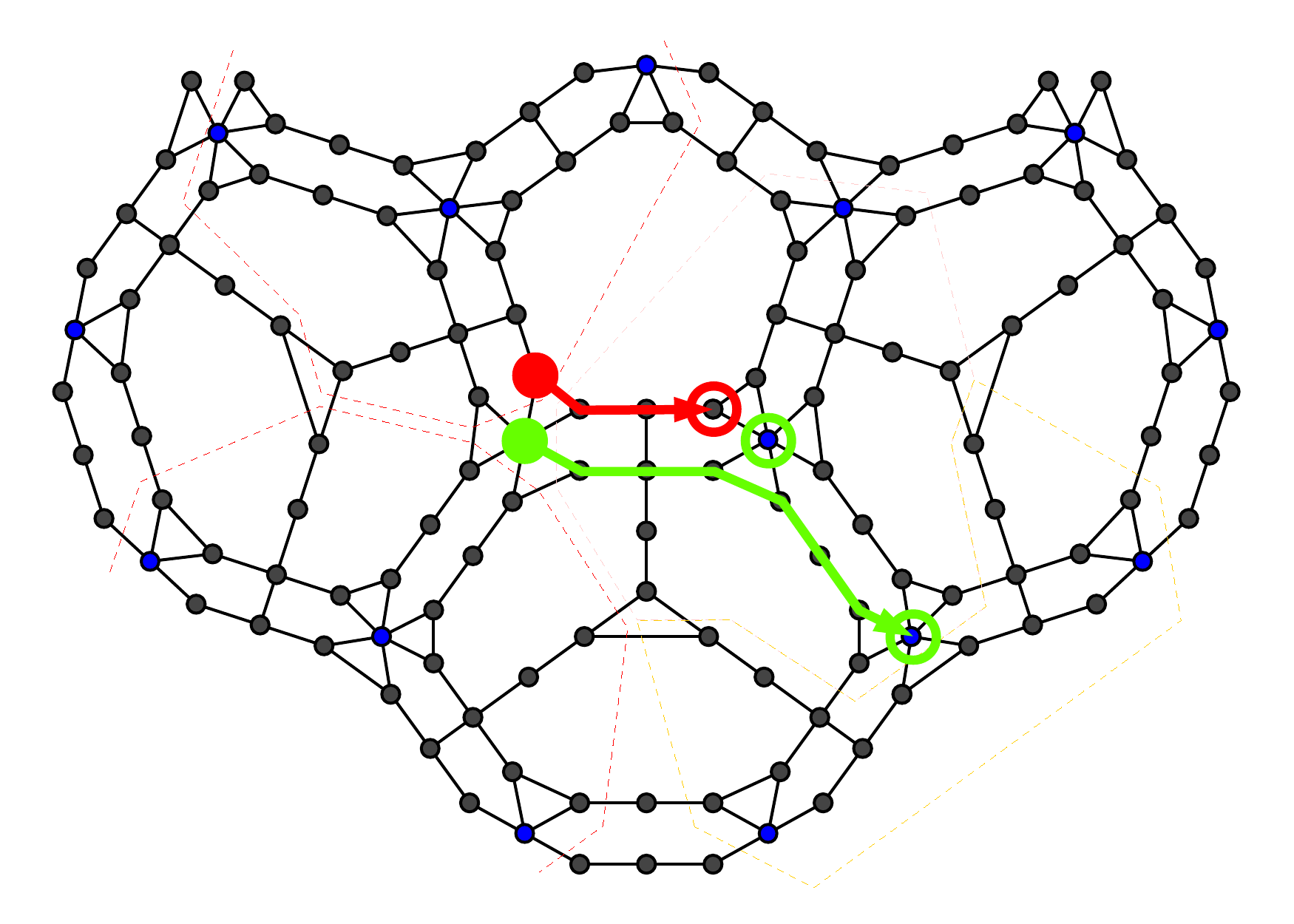}\end{centerbox}
\end{roundedbox}
\end{centerbox}
\end{minipage}

\hfill{} \dotfill{} \hfill{}

\begin{minipage}[t]{0.49\columnwidth}%
Since the cop in $A_{1}$ is exactly $2$ steps from $v$, and in
particular not at the X-marked vertex $x_{1}$ (see right), the robber
can use the green path to move safely to one of the key vertices $t,v_{2}$
or back to $v$, exactly like in \ref{sub:even1} (though the third
cop is in $A_{4}$ rather than $A_{3}$).%
\end{minipage}\hfill{}%
\begin{minipage}[t]{0.49\columnwidth}%
\begin{centerbox}
\begin{roundedbox}
\begin{centerbox}
\includegraphics[bb=14bp 14bp 490bp 343bp,clip]{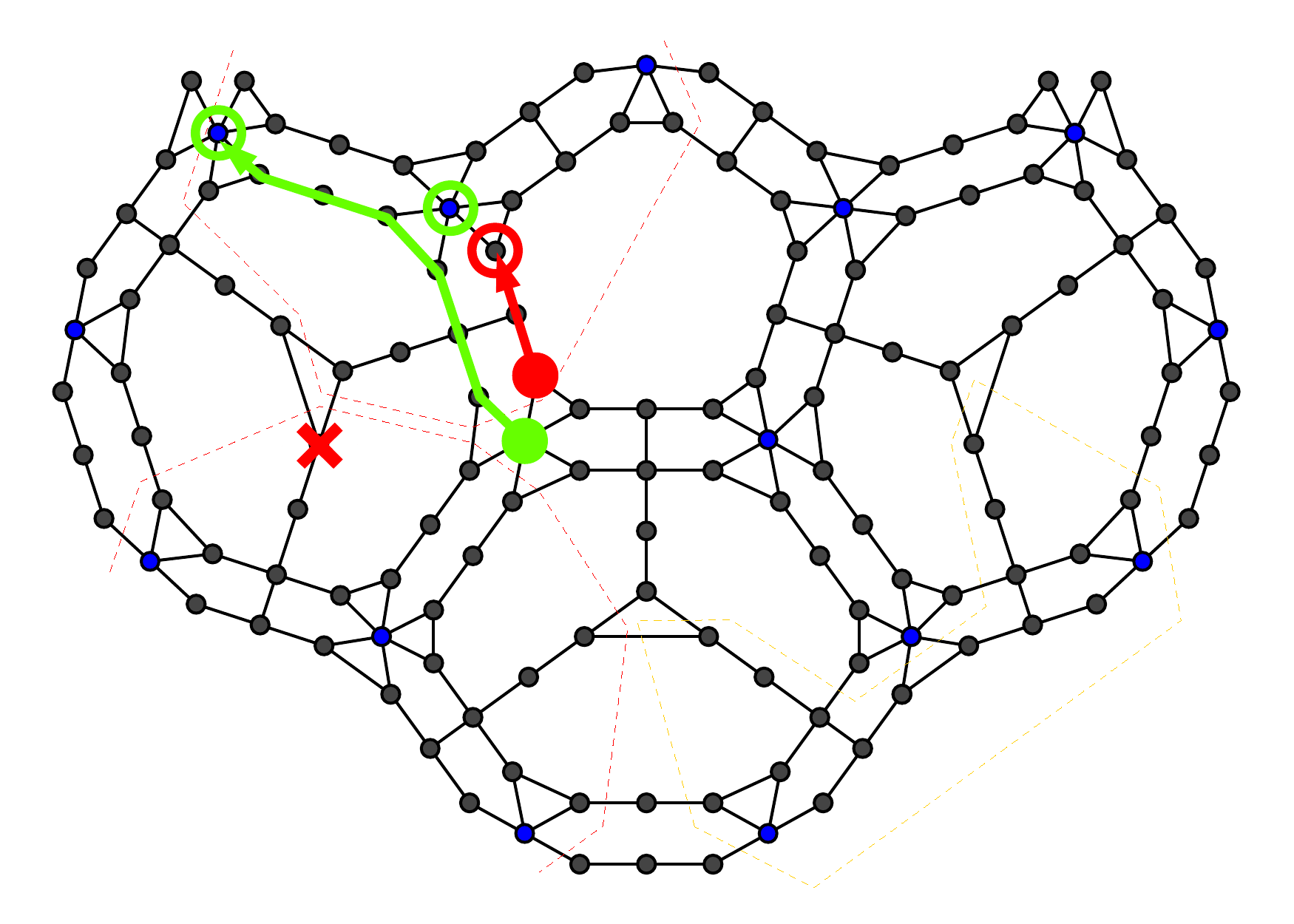}\end{centerbox}
\end{roundedbox}
\end{centerbox}
\end{minipage}

\subsubsection{Case 2}

\begin{minipage}[t]{0.49\columnwidth}%
By the \nameref{lem:nearness} there must be a cop in $A_{2}$, since
the cop at $w$ cannot guard $v_{2}$. Moreover, there must be a cop
in the yellow-dotted region $A_{5}$ (see right), which encloses vertices
outside $A_{3}$ that are within $4$ steps from the key vertex $v_{4}$
at the end of the green path, otherwise the robber can use the green
path to move safely to either $v_{3}$ or $v_{4}$ by the \nameref{lem:nearness},
since when the robber takes the first $3$ steps along the green path,
the cop that was at $w$ must follow along the red path to guard $v_{3}$,
after which the robber is only $4$ steps away from $v_{4}$.

Henceforth for the rest of this case we can assume that there is a
cop in each of $A_{1},A_{2},A_{5}$.%
\end{minipage}\hfill{}%
\begin{minipage}[t]{0.49\columnwidth}%
\begin{centerbox}
\begin{roundedbox}
\begin{centerbox}
\includegraphics[bb=14bp 14bp 490bp 343bp,clip]{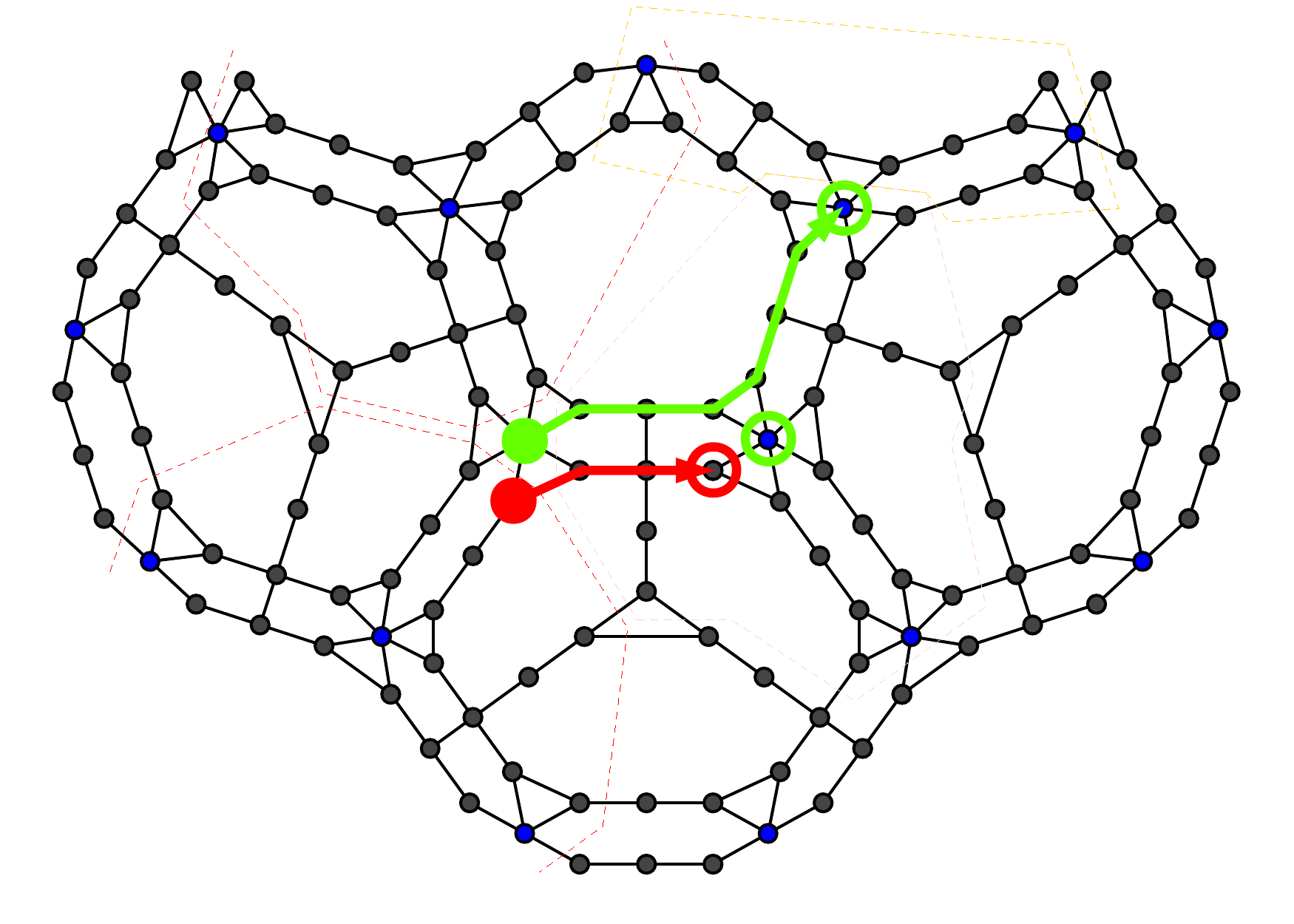}\end{centerbox}
\end{roundedbox}
\end{centerbox}
\end{minipage}

\hfill{} \dotfill{} \hfill{}

\begin{minipage}[t]{0.49\columnwidth}%
If the cop exactly $2$ steps from $v$ is in $A_{2}$, then there
is exactly one cop in each of $A_{1},A_{2},A_{5}$, and in particular
there is no cop at the X-marked vertex (see right), and so the robber
can use the green path to move safely to one of the green-circled
key vertices $u',v_{1}$ ($u'$ is the one on the left) or back to
$v$, exactly like in \ref{sub:even2} (though the third cop is in
$A_{5}$ rather than $A_{3}$).%
\end{minipage}\hfill{}%
\begin{minipage}[t]{0.49\columnwidth}%
\begin{centerbox}
\begin{roundedbox}
\begin{centerbox}
\includegraphics[bb=14bp 14bp 490bp 343bp,clip]{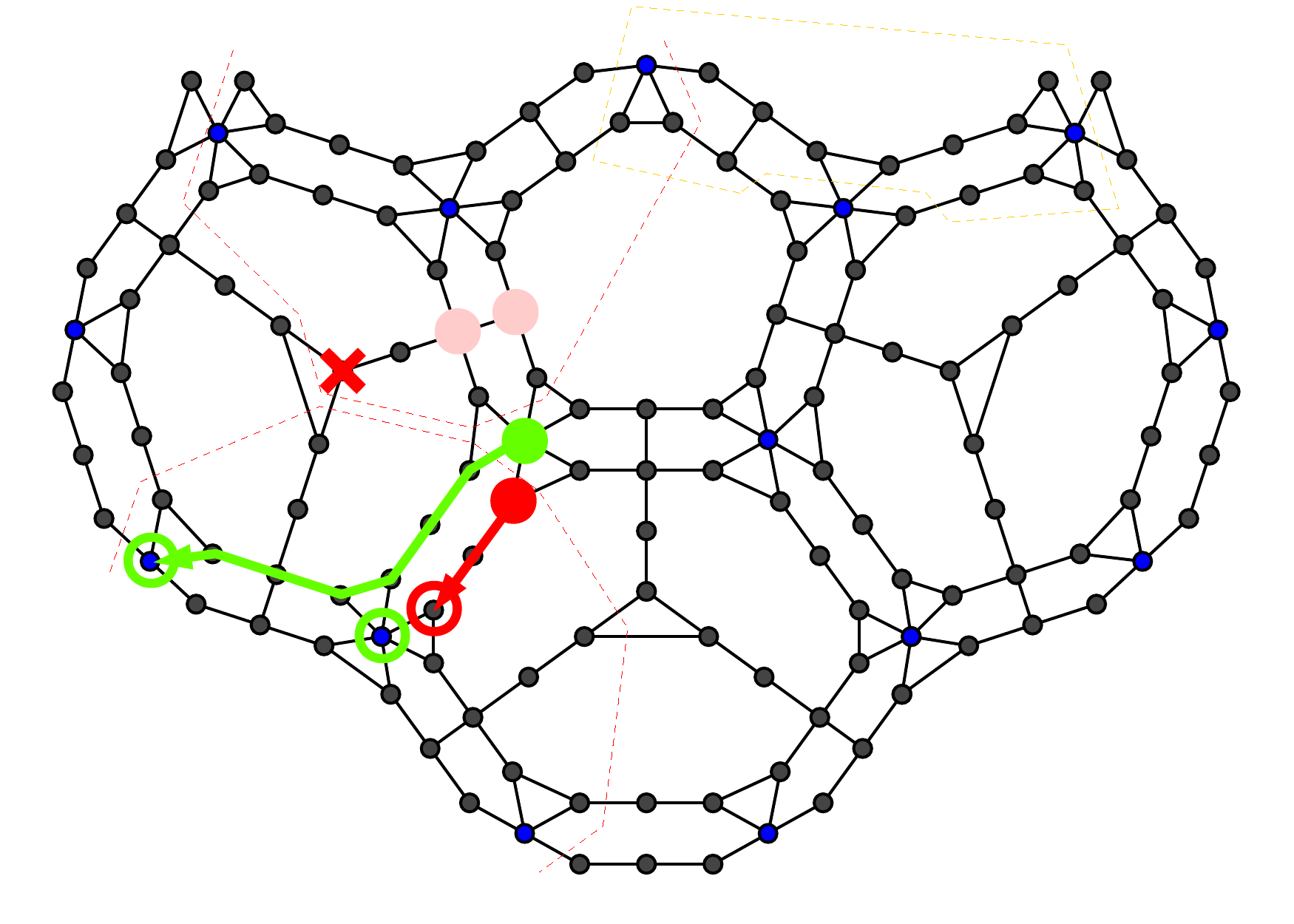}\end{centerbox}
\end{roundedbox}
\end{centerbox}
\end{minipage}

\hfill{} \dotfill{} \hfill{}

\begin{minipage}[t]{0.49\columnwidth}%
But if the cop exactly $2$ steps from $v$ is in $A_{1}$, then there
are already $2$ cops in $A_{1}$, and so the third cop must be in
$A_{2}\cap A_{5}$. Thus the robber can use the green path (see right)
to move safely to the green-circled vertex $m$, and the cop at $w$
must follow along the red path to the red-circled vertex $n$ in order
to guard $v_{3}$ by the \nameref{lem:nearness}, during which the
other cops cannot move and hence remain within $A_{1}\cup A_{2}$.

Thus after moving along the green path to $m$, if the robber cannot
reach $v_{3}$ safely in the next move, then on that move it must
be that one cop is at $n$ and the other two cops are in $A_{1}\cup A_{2}$,
and this situation is covered by \ref{sub:even4}.%
\end{minipage}\hfill{}%
\begin{minipage}[t]{0.49\columnwidth}%
\begin{centerbox}
\begin{roundedbox}
\begin{centerbox}
\includegraphics[bb=14bp 14bp 490bp 343bp,clip]{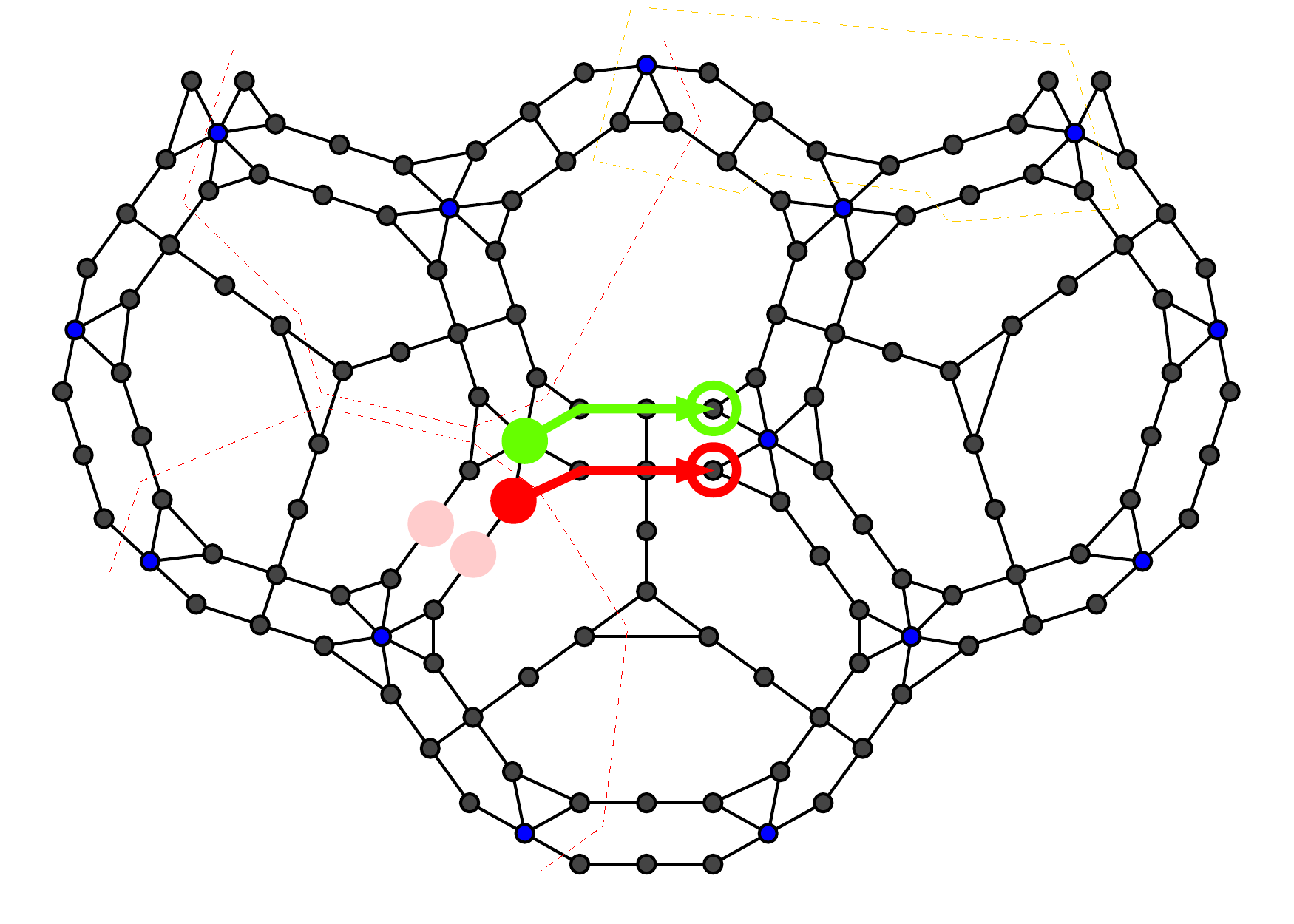}\end{centerbox}
\end{roundedbox}
\end{centerbox}
\end{minipage}

\subsubsection{Case 3}

\begin{minipage}[t]{0.49\columnwidth}%
By the \nameref{lem:nearness} there must be a cop in $A_{3}$, since
the cop at $w$ cannot guard $v_{3}$. Moreover, there must be a cop
in the yellow-dotted region $A_{6}$ (partly shown on the right) that
encloses vertices outside $A_{1}$ that are within $4$ steps from
the key vertex $u$ at the end of the green path, otherwise the robber
can use the green path to move safely to either $v_{1}$ or $u$ by
the \nameref{lem:nearness}, since when the robber takes the first
$3$ steps along the green path, the cop that was at $w$ must follow
along the red path to guard $v_{1}$, after which the robber is only
$4$ steps away from $u$.

Since $A_{2},A_{3},A_{6}$ are disjoint, we can for the rest of this
case assume that there is exactly one cop in each of $A_{2},A_{3},A_{6}$.%
\end{minipage}\hfill{}%
\begin{minipage}[t]{0.49\columnwidth}%
\begin{centerbox}
\begin{roundedbox}
\begin{centerbox}
\includegraphics[bb=14bp 14bp 490bp 343bp,clip]{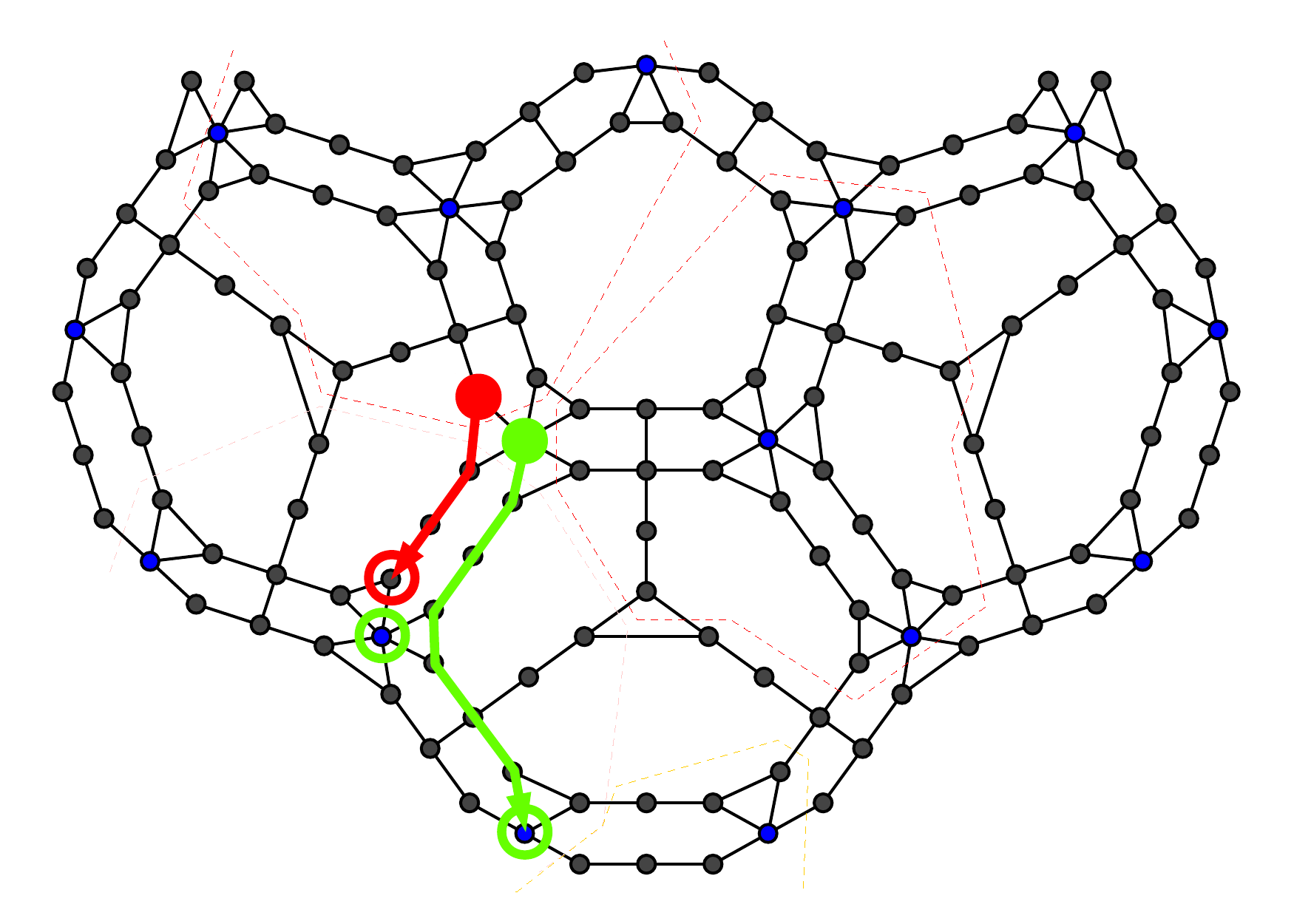}\end{centerbox}
\end{roundedbox}
\end{centerbox}
\end{minipage}

\hfill{} \dotfill{} \hfill{}

\begin{minipage}[t]{0.49\columnwidth}%
Since the cop in $A_{3}$ is exactly $2$ steps from $v$, and in
particular is at least $8$ steps away from the green-circled key
vertex $t'$ (see right), the robber can use the green path to move
safely to either $t'$ or back to $v$.

More precisely, after the robber takes $1$ step along the green path,
if the cop that was at $w$ moves away from $v$ then the robber can
move safely back to $v$, otherwise the robber can continue moving
safely along the green path since it is already only $6$ steps away
from $t'$.%
\end{minipage}\hfill{}%
\begin{minipage}[t]{0.49\columnwidth}%
\begin{centerbox}
\begin{roundedbox}
\begin{centerbox}
\includegraphics[bb=14bp 14bp 490bp 343bp,clip]{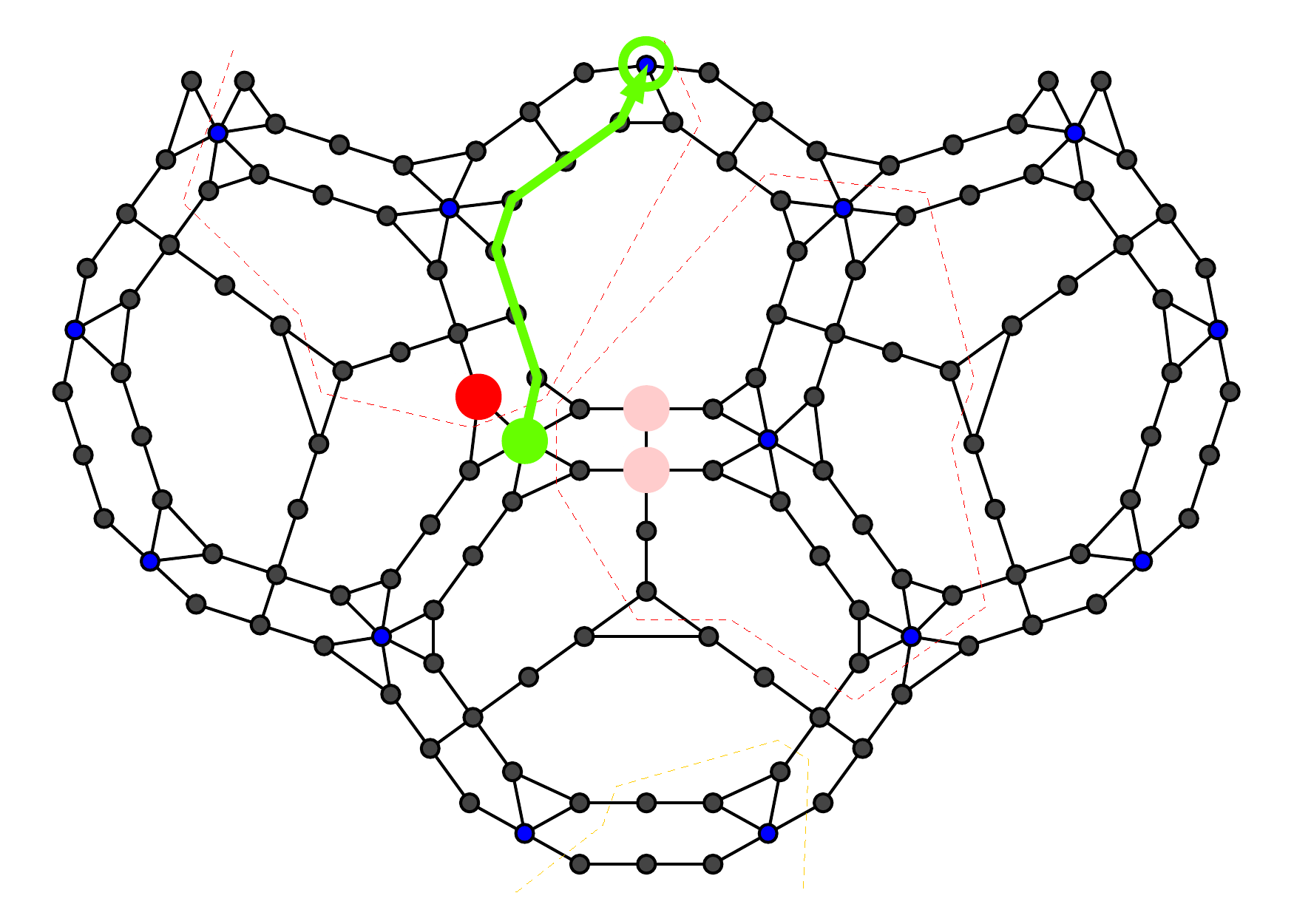}\end{centerbox}
\end{roundedbox}
\end{centerbox}
\end{minipage}

\clearpage{}

\section{7 Cops Catch the Robber}

\scalegraphics{.7}

Even though it seems like the robber can barely manage to escape from
$3$ cops using the strategy given in the previous section, it is
partly because in that strategy the robber waits until a cop is right
next to it, and there does not seem to be a concise strategy for $4$
cops to catch the robber, if there is one at all.

Nevertheless, it is not too hard to give a winning strategy for $7$
cops, establishing that $cop_{1}(\g)\le7$, which we shall do in this
section. The intuitive idea behind this strategy is to use some cops
to `guard' some vertices so as to restrict the robber to certain
possible regions. At the start we move the cops into an initial `guarding'
configuration, and thereafter in each phase we keep the robber `confined'
to a region using some cops while moving the other cops to new `guarding'
positions to `divide' that region, so that the robber would now
be `confined' to a smaller region.

\subsection{Hexagon Guarding}

\label{sub:hex-guard}

We begin with a lemma concerning how one cop can be used to guard
a hexagon (shown below in blue), namely to guard the three `sides'
of a hexagonal face of the truncated icosahedron $B$ that are adjacent
to the neighbouring pentagonal faces, in the sense of preventing the
robber from `crossing over'. The rough idea is that the cop will
try to stay in the \textbf{central vertices} of the hexagon, namely
at one of the three vertices of the triangle in the centre of the
hexagon, and move towards one `side' only when the robber gets close
to that side.
\begin{centerbox}
\begin{roundedbox}
\begin{centerbox}
\includegraphics[bb=14bp 14bp 490bp 343bp,clip]{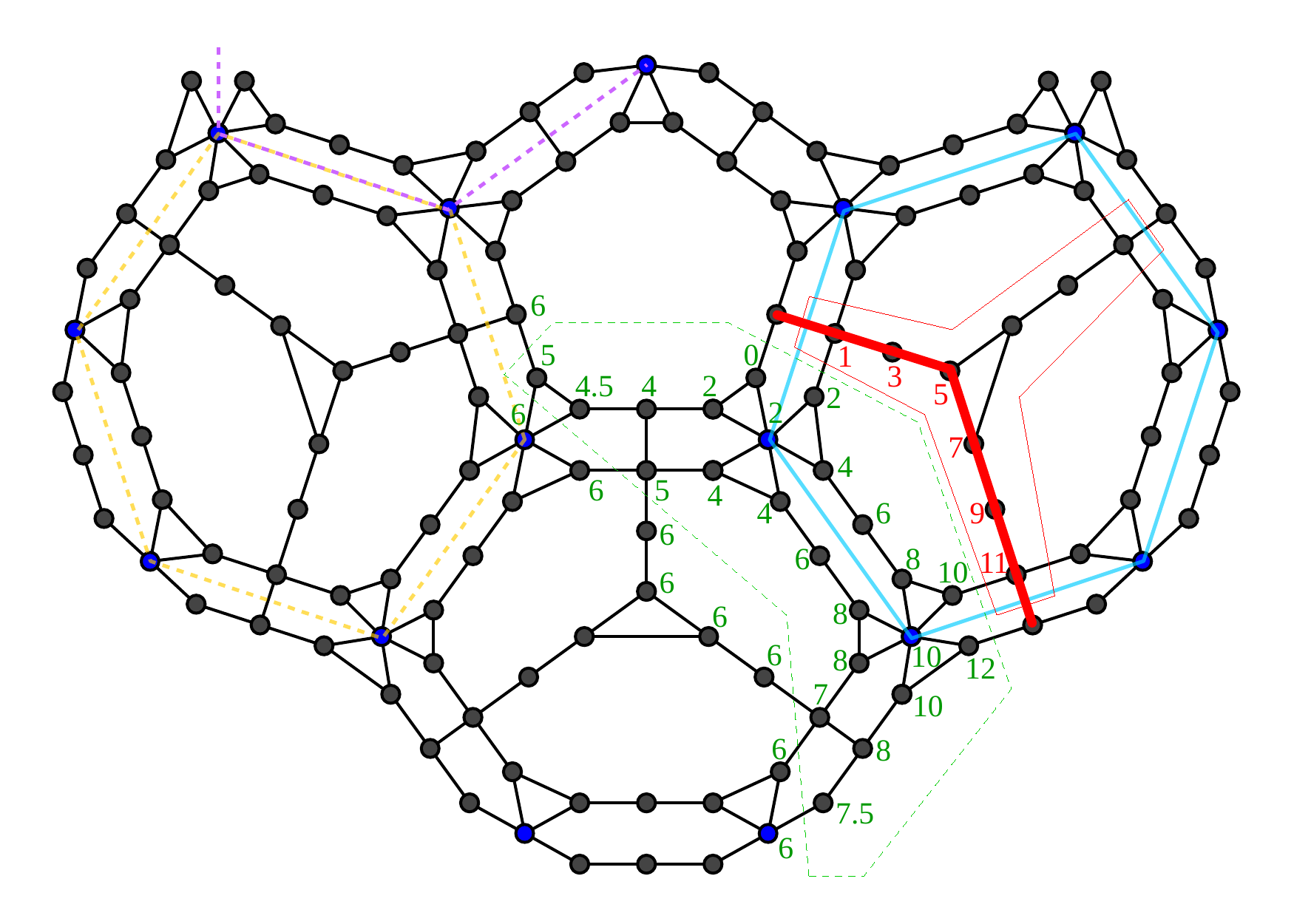}\end{centerbox}
\end{roundedbox}

\end{centerbox}
For a cop to guard the blue hexagon, it must stay within the red-outlined
region, and its position must be as follows.

If the robber is within a \textbf{proximal region} of the hexagon,
shown as the green-dotted region (the other two proximal regions are
positioned symmetrically around the hexagon), it is labelled according
to the number on its vertex as shown, and the cop must be \textbf{\textit{either}}
on the shown red path and labelled according to the number on its
vertex, \textbf{\textit{or}} at the central vertex beside the red
path (next to both the $5$-labelled and $7$-labelled vertices) and
labelled $5$ or $7$, whichever is furthest from the robber's label.
Define the \textbf{deviation} of such a cop to be the (absolute) difference
between the cop's label and the robber's label. (The labelling is
different for a cop guarding another hexagon.) If the robber is not
within any proximal region or its corresponding red path, it is labelled
$6$, and the cop's \textbf{deviation} is defined to be $2d+1$ where
$d$ is its distance from the nearest central vertex. If the robber
is on the red path, the cop's \textbf{deviation} is defined to be
$2d+1$ where $d$ is its distance from the robber. Intuitively, the
cop's deviation captures roughly how far it is from being able to
guard its hexagon, where any deviation of $1$ or less is optimal.

We say that the cop \textbf{guards} this hexagon iff the cop is positioned
as stated above and its deviation is at most $2$. Clearly, if such
a cop can move on every turn, then it can preserve this invariant
and hence prevent the robber from crossing the red path (i.e.~moving
to any of its vertices) without getting caught. In general, we want
to maintain this even with multiple cops each guarding some hexagon.
To do so, we say that a set $S$ of cops \textbf{strongly guard} their
hexagons iff the cops in $S$ guard distinct non-adjacent hexagons
and furthermore at most one cop in $S$ has deviation more than $1$,
and we shall prove the following crucial lemma.
\begin{lem}[Hexagon Guard Lemma]
\label{lem:hex-guard} Take any game state where it is the robber's
turn and some set $S$ of cops strongly guard their hexagons. Then
from that point onwards, the cops can either win or indefinitely preserve
the invariant that the cops in $S$ strongly guard their hexagons
after the cops' turn. If additionally no three cops in $S$ guard
hexagons that are all adjacent to the same hexagon, then in the latter
case the cops can move in such a way that infinitely often on their
turn no cop in $S$ moves.\end{lem}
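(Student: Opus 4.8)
The plan is to track the deviations of the cops in $S$ as the robber moves, and to argue that a single "potential" argument forces the conclusion. Fix the game state, and consider the robber's move from a vertex $r$ to an adjacent vertex $r'$. For each cop $c\in S$ guarding a hexagon $h_c$, the robber's move changes the label attached to $r$ (with respect to $h_c$) by at most $1$, so the target deviation that $c$ must achieve after the cops' turn increases by at most $1$ over its current value; and if $c$ already had deviation $\le 1$, then the robber's single step can force $c$'s required deviation to be at most $2$. The key local fact — which I would extract as a sub-observation from the definitions of proximal region, red path, central vertices, and deviation — is that from any position in the red-outlined region with deviation $d$, a cop can in one step reach a position whose deviation (relative to the \emph{new} robber label) is $\le\max(d-1,1)$, unless the robber's move already leaves it with required deviation $\le 2$ for free. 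In other words, a cop with deviation $2$ can be "repaired" to deviation $\le 1$ in one move, and a cop with deviation $\le 1$ can maintain deviation $\le 1$ — provided it is allowed to move.

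With that local fact in hand, the invariant is preserved as follows. At the robber's turn, at most one cop in $S$, say $c^\*$, has deviation $2$; all others have deviation $\le 1$. After the robber moves, the cops move: first move $c^\*$ to restore its deviation to $\le 1$ (possible by the local fact, unless the robber's step handed it deviation $\le 2$ anyway — but we must be careful that the robber cannot simultaneously force a \emph{second} cop up to deviation $2$ while $c^\*$ is still at $2$; this is exactly where non-adjacency of the guarded hexagons is used, since a single robber vertex lies in the proximal region / red path of at most one hexagon among pairwise non-adjacent ones, so at most one cop's required deviation can exceed $1$ after the robber's step). Every other cop, having had deviation $\le 1$, now needs deviation $\le 2$, and one move suffices to bring it back to $\le 1$ if desired, or simply to hold it at $\le 2$. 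Thus after the cops' turn at most one cop in $S$ has deviation exactly $2$ and the rest have deviation $\le 1$: the strong-guarding invariant holds. If at any point the robber would step onto a red path vertex of some $h_c$, the guarding cop (deviation $\le 2$, hence within the reachable set) can instead move onto the robber and win; this is the "either win" alternative.

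For the last sentence, suppose additionally that no three cops in $S$ guard hexagons all adjacent to a common hexagon. I would argue that under this hypothesis the robber's step can raise the "required repair cost" of the collection of cops only so much that, on infinitely many turns, \emph{all} cops in $S$ already have deviation $\le 1$ with nothing to repair, so none of them needs to move. Concretely: define $\Phi = \sum_{c\in S}(\text{deviation of }c)$ or better the multiset of deviations; each robber step increases the "work" by at most the number of hexagons whose proximal region / red path contains $r'$, which the extra hypothesis bounds in a way that lets the cops not merely maintain but \emph{strictly improve} whenever the robber is "far", and the robber cannot stay "close" to the guarded structures forever without either being trapped against a red path or retreating into a region where all deviations are $\le 1$ and stable. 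Making this rigorous — pinning down the exact combinatorial statement that the no-common-neighbour hypothesis yields, and showing the robber cannot perpetually "pump" deviations — is the main obstacle, and I expect it to require a careful case analysis of how a single vertex $r'$ can lie relative to two guarded hexagons sharing a common neighbouring hexagon versus not, together with a simple infinite-pigeonhole argument on the turns where $\Phi$ returns to its minimum value.
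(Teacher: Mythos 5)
Your proposal has two genuine gaps, one in each half of the lemma.

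First, the claim that ``a single robber vertex lies in the proximal region / red path of at most one hexagon among pairwise non-adjacent ones, so at most one cop's required deviation can exceed $1$ after the robber's step'' is false, and it is the load-bearing step of your invariant-preservation argument. Two non-adjacent hexagons can share a common neighbouring hexagon, and their proximal regions come close enough that a single robber move can change the labels with respect to \emph{both} of them; this is exactly why the paper's proof allows up to two \emph{relevant} cops per robber move and devotes three tables to enumerating the simultaneous label changes for pairs of guarded hexagons (e.g.\ robber labels $(6,2)\to(5,0)$). Since this is the $1$-cop-move game, only one cop may be repaired per turn, so when two cops' deviations both increase you cannot ``first move $c$ to restore its deviation'' and then also fix the others: the whole point of the asymmetric invariant (at most one cop at deviation $2$, the rest at deviation $\le 1$) is that after a two-cop disturbance the single available move goes to the worse cop, leaving the other at deviation exactly $2$, and one must check case by case that this really suffices. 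Your sketch, taken literally, would allow two cops to sit at deviation $2$ (or one at $3$) after the cops' turn, breaking strong guarding.

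Second, for the ``infinitely often no cop in $S$ moves'' conclusion you propose a potential function $\Phi$ but concede you cannot pin down how the no-three-around-a-common-hexagon hypothesis enters. This is not a routine detail: the paper's remark immediately after the lemma exhibits a robber trajectory ($5\to4.5\to4\to2\to0\to2\to4\to6\to8\to10\to12$) that forces a cop move on \emph{every} turn forever when three guarded hexagons do surround a common hexagon, so no monotone potential argument can work without an analysis fine enough to separate the two situations. The paper instead enumerates every way the robber can enter a proximal region and shows each forced continuation either reaches a state where no cop needs to move or funnels the robber into the proximal region of a third guarded hexagon adjacent to the same hexagon as the other two --- the excluded configuration. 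That enumeration is the substance of the proof, and it is exactly the part your proposal leaves open.
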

\begin{proof}
We can assume that the robber does not move next to a cop, otherwise
the cops can immediately win. Thus the robber cannot move onto the
red path for any cop in $S$. Observe that when the robber moves,
the deviation of each cop in $S$ is still well-defined, and there
are at most two relevant cops in $S$, where a cop in $S$ is \textbf{relevant}
iff the robber moved within, into or out of a proximal region of the
hexagon guarded by that cop. So when the robber moves, only relevant
cops in $S$ can have their deviations changed, and by at most $2$,
whereas irrelevant cops in $S$ have their deviation remaining at
$1$. Also, if the new deviation for a relevant cop in $S$ is more
than $1$, then either the robber is now in a proximal region for
that cop, in which case it is possible for the cop to move (in one
step) to a vertex on the red path to adjust its deviation from $d$
to $|d-2|$, or the robber has just left a proximal region, in which
case it is possible for the cop to move to a central vertex to adjust
its deviation in the same way. There are two main cases:
\begin{enumerate}
\item There is at most one cop in $S$ whose deviation $d$ changed. If
now $d\le2$, then the invariant still holds. But if now $d>2$, then
that cop can move to adjust its deviation to $|d-2|$, hence preserving
the invariant since $2<d\le4$ implies $|d-2|\le2$.
\item There are exactly two relevant cops in $S$ with deviations $d,d'$
where initially $d\ge d'$, and their deviations changed by at most
$1$ each. By the strong guarding invariant, initially $d\le2$ and
$d'\le1$, so now $d\le3$ and $d'\le2$. If now $d\le1$, the invariant
already holds. But if now $d>1$, then the cop with deviation $d$
can move to adjust its deviation to $|d-2|$, hence preserving the
invariant since $1<d\le3$ implies $|d-2|\le1$.
\end{enumerate}
We now list the remaining cases for two relevant cops in the following
tables. By symmetry we can assume that one of the relevant cops in
$S$ guards the blue hexagon, and the robber is in the upper half
of the green-dotted region. The first table is for when the other
relevant cop in $S$ guards the purple-dotted hexagon (partly shown),
and the second and third tables are for when the other relevant cop
in $S$ guards the yellow-dotted hexagon. Each case is given on a
separate row, characterized mainly by the change in the labels for
the robber given in the first column (with respect to those two cops).
For each pair of robber label changes, there are only a few cases
in which we need to move a cop to preserve the invariant, and in each
case we can indeed move just one of those two cops to do so, resulting
in the cop label changes given in the second column.
\begin{centerbox}
\begin{tabular}{|c|c|}
\hline 
\textbf{Robber labels} & \textbf{Cop labels}\tabularnewline
\hline 
$(6,6)\to(5,4.5)$ & $(5,7)\to(5,5)$\tabularnewline
$(6,6)\to(5,4.5)$ & $(7,7)\to(7,5)$\tabularnewline
$(5,4.5)\to(6,6)$ & $(5,3)\to(5,5)$\tabularnewline
$(5,4.5)\to(6,6)$ & $(3,5)\to(5,5)$\tabularnewline
\hline 
\end{tabular}\quad{}%
\begin{tabular}{|c|c|}
\hline 
\textbf{Robber labels} & \textbf{Cop labels}\tabularnewline
\hline 
$(6,2)\to(5,0)$ & $(5,3)\to(5,1)$\tabularnewline
$(6,2)\to(5,0)$ & $(7,3)\to(7,1)$\tabularnewline
$(5,0)\to(6,2)$ & $(3,1)\to(5,1)$\tabularnewline
$(6,2)\to(4.5,2)$ & $(7,1)\to(5,1)$\tabularnewline
$(6,2)\to(4.5,2)$ & $(7,3)\to(5,3)$\tabularnewline
$(4.5,2)\to(6,2)$ & $(3,1)\to(5,1)$\tabularnewline
$(4.5,2)\to(6,2)$ & $(3,3)\to(5,3)$\tabularnewline
\hline 
\end{tabular}\quad{}%
\begin{tabular}{|c|c|}
\hline 
\textbf{Robber labels} & \textbf{Cop labels}\tabularnewline
\hline 
$(4.5,2)\to(5,0)$ & $(5,3)\to(5,1)$\tabularnewline
$(4.5,2)\to(5,0)$ & $(3,3)\to(3,1)$\tabularnewline
$(5,0)\to(4.5,2)$ & $(7,1)\to(5,1)$\tabularnewline
$(4.5,2)\to(4,4)$ & $(5,1)\to(5,3)$\tabularnewline
$(4.5,2)\to(4,4)$ & $(3,1)\to(3,3)$\tabularnewline
$(4,4)\to(4.5,2)$ & $(5,5)\to(5,3)$\tabularnewline
$(4,4)\to(4.5,2)$ & $(3,5)\to(3,3)$\tabularnewline
\hline 
\end{tabular}
\end{centerbox}
\clearpage{}

Finally, under the additional assumption that no three cops in $S$
guard hexagons that are all adjacent to the same hexagon, we shall
prove that player $C$ (Cops) can move the cops in $S$ as stipulated
above to preserve the invariant, such that after finitely many turns
the game will reach a state where it is $C$'s turn and $C$ does
not need to move any cop in $S$ to preserve the invariant (i.e. the
invariant is already satisfied). To do so, we shall consider any robber
strategy where $C$ always has to move some cop in $S$ to preserve
the invariant, and show that it is impossible.

Each robber move must change the cop deviations, so we can assume
that one of the relevant cops $x$ in $S$ guards the blue hexagon,
and it is not hard to verify that:
\begin{enumerate}
\item The robber cannot move to an adjacent vertex with the same robber
labels (including moving to or from a central vertex or along one
of the broken edges in the below-left diagram).
\item The robber cannot move to a vertex with the same robber labels as
one move ago (including moving backwards along the same edge that
it used in the previous turn), otherwise the invariant would still
be satisfied without any cop moving. This entails checking each of
the above cases one by one:

\begin{enumerate}
\item First main case: Only one cop in $S$ has deviation changed after
the (previous) robber move. With respect to that cop, let $c,r$ be
the labels for the cop and robber respectively after that move, and
$r'$ be the robber label before that move. It must be that $|c-r|>2$
to make that cop move, hence by symmetry we can assume $c+2<r$, and
the new cop label is $c+2$. Trivially $r\le r'+2$ (see the diagram
in \ref{sub:hex-guard}), yielding $c<r'$. And $r'\le c+2$ since
that cop was guarding its hexagon before the robber move. Thus $c<r'\le c+2$
and hence $|(c+2)-r'|\le2$, so if the robber returns to the previous
labels, then no cop needs to move.
\item Second main case: Exactly two cops in $S$ have deviations $d,d'$
changed after the (previous) robber move, each by at most $1$, where
initially $d\ge d'$ and so $d'\le1$. It must be that after that
move $d>1$ to make the corresponding cop move. With respect to that
cop, let $c,r$ be the labels for the cop and robber respectively
after that robber move. Then $|c-r|>1$, so by symmetry we can assume
$c+1<r$, and the new cop label is $c+2$. It cannot be that $r'<c$,
otherwise $r\le r'+2\le c+1.5\le r$, which forces $r=r'+2=c+1.5$
and implies that $r,r'$ are distinct non-integers, which is impossible
(see the diagram in \ref{sub:hex-guard}). Thus $c\le r'\le c+2$
and hence $|(c+2)-r'|\le2$, so the robber must not return to the
previous labels, otherwise we once again have $d\le2$ and $d'\le1$
and hence no cop needs to move.
\item Remaining cases: We can easily check that, in each row of the above
tables, the new cop labels satisfy the desired invariant with the
old robber labels.
\end{enumerate}
\end{enumerate}
From these we can infer that the robber also cannot move to the vertices
erased in the below-right diagram. For convenience, we also mark three
of the $6$-labels as $6a,6b,6c$ to distinguish those vertices for
easy reference later.
\begin{centerbox}
\begin{roundedbox}
\begin{centerbox}
\includegraphics[bb=140bp 14bp 490bp 343bp,clip,scale=0.91]{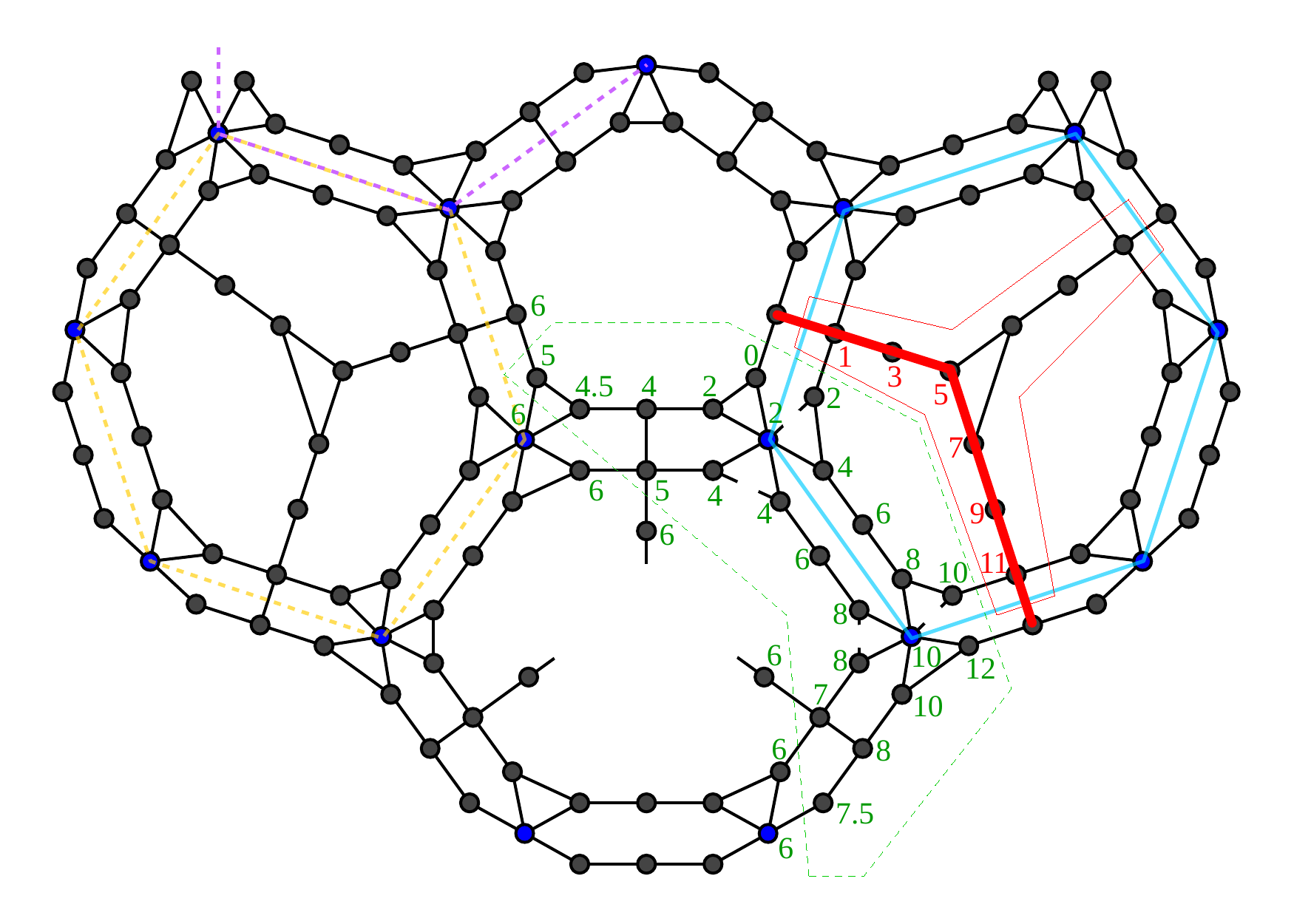}\quad{}\includegraphics[bb=140bp 14bp 490bp 343bp,clip,scale=0.91]{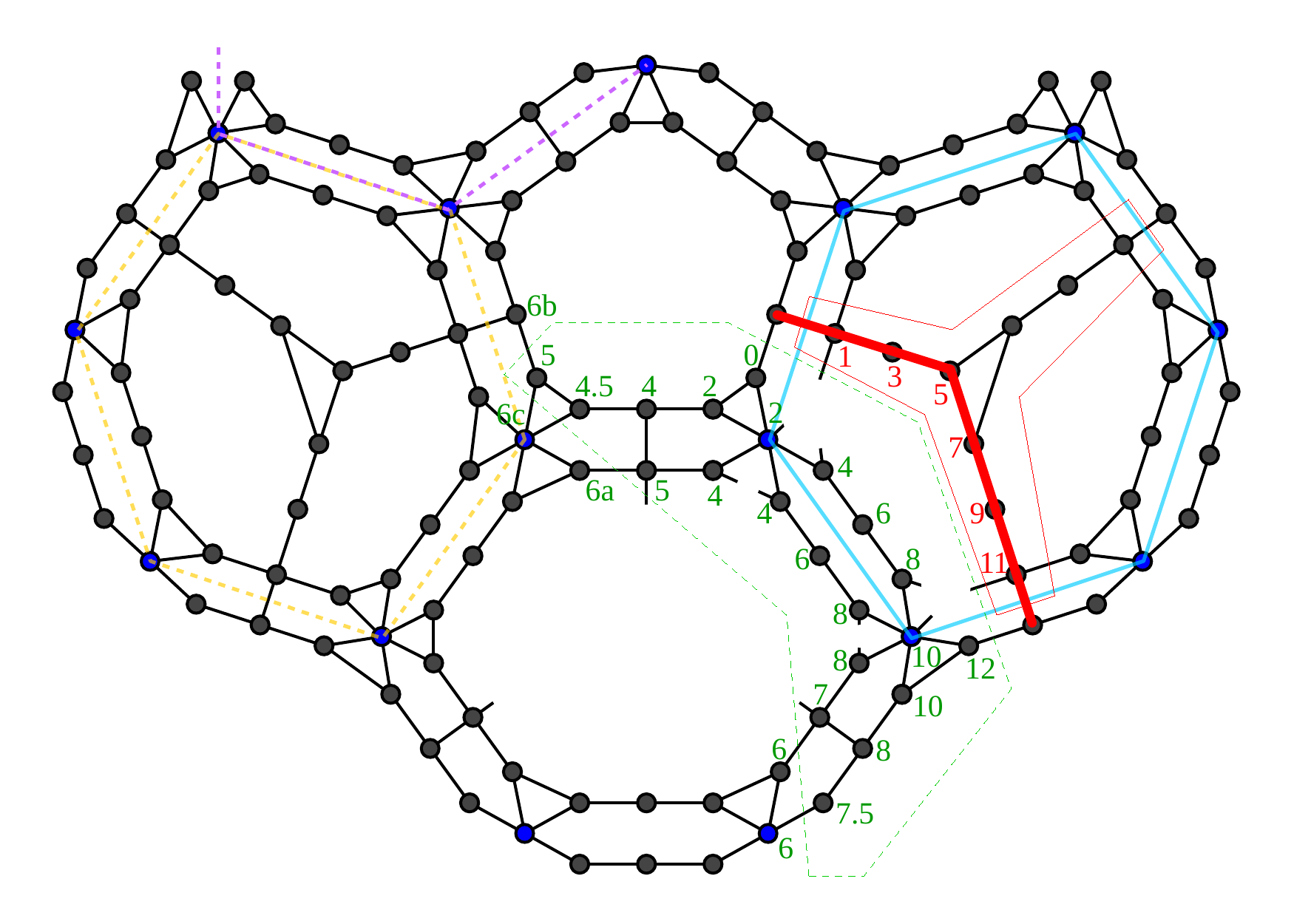}\end{centerbox}
\end{roundedbox}

\end{centerbox}
\clearpage{}

It remains to analyze all ways the robber can enter the proximal region
for $x$ (classified by the robber label sequence):
\begin{enumerate}
\item The robber moves $6a\to5$. For the invariant to be violated after
that, there must be a cop $y$ in $S$ guarding the yellow hexagon,
and the robber label for $y$ changes $4\to5$, and the cop labels
for $(x,y)$ must be initially $(7,3)$. But the cops can move $(7,3)\to(5,3)$,
after which the robber cannot move $(5,5)\to(4,4)$ (vertically) otherwise
the cops do not need to move. So the robber must continue $(5,5)\to(4,6)$
(horizontally), and the cops continue $(5,3)\to(5,5)$. After that,
the robber is forced to continue $(4,6)\to(2,6)\to(0,5)\to(2,4.5)$,
and the cops continue $(5,5)\to(3,5)\to(1,5)$ and then do not need
to move (on the next turn).
\item The robber moves $6b\to5$. For the invariant to be violated after
that, there must be a cop $y$ in $S$ guarding the purple hexagon,
and the robber label for $y$ changes $4\to4.5$, and the cop labels
for $(x,y)$ must be initially $(7,3)$. But the cops can move $(7,3)\to(5,3)$,
after which the robber cannot move $(5,4.5)\to(4.5,5)$, and so must
continue $(5,4.5)\to(6,6)\to(4.5,5)$, and the cops continue $(5,3)\to(5,5)$,
and then do not need to move.
\item The robber moves $6c\to5$. Then for the invariant to be violated
after that move, there must be a cop $y$ in $S$ guarding either
the purple hexagon or the yellow hexagon.

\begin{enumerate}
\item \label{enu:halt} If $y$ is guarding the purple hexagon, then the
robber label for $y$ changes $6\to4.5$, and the cop labels for $(x,y)$
must be initially $(5,7)$ or $(7,7)$. $y$ moves $7\to5$, after
which the robber cannot move $(5,4.5)\to(6,4)$, and so must continue
$(5,4.5)\to(4.5,5)$. It must be that the cop labels are now $(7,5)$
to force them to continue $(7,5)\to(5,5)$. But after that the robber
must continue $(4.5,5)\to(4,6)$ or $(4.5,5)\to(6,6)$, and then the
cops do not need to move.
\item \label{enu:run1} If $y$ is guarding the yellow hexagon, then the
robber label for $y$ changes $2\to0$, and the robber must move $(6,2)\to(5,0)\to(4.5,2)$,
and the cop labels for $(x,y)$ must be initially $(5,3)$ or $(7,3)$.
$y$ moves $3\to1$, after which the cop labels must be $(7,1)$ to
force them to continue $(7,1)\to(5,1)$. After that, the robber cannot
move $(4.5,2)\to(6,2)$, and so must continue $(4.5,2)\to(4,4)$,
and the cops continue $(5,1)\to(5,3)$. Again, the robber cannot move
$(4,4)\to(5,5)$, but must continue $(4,4)\to(2,4.5)\to(0,5)\to(2,6)$,
and the cops continue $(5,3)\to(3,3)\to(1,3)\to(1,5)$. Again, the
robber cannot move $(2,6)\to(2,4.5)$, nor $(2,6)\to(4,6)\to(5,5)$
otherwise the cops continue $(1,5)\to(3,5)$ and then do not need
to move. Hence the robber must continue on the path $2\to4\to6\to8\to10$
(labels for $x$). On the next move after that, if the robber does
not enter a new proximal region, it must move $10\to12$ to force
$x$ to move $9\to11$, but on the subsequent turn no cop needs to
move. Therefore there must be another cop in $S$ guarding the hexagon
just below the bottommost hexagon in the diagram, and the robber must
enter its proximal region. 
\end{enumerate}
\item The robber moves $6c\to4.5$. There must be a cop $y$ in $S$ guarding
either the purple hexagon or the yellow hexagon, otherwise the cop
label for $x$ must be initially $7$, and $x$ moves $7\to5$, after
which the robber must continue $4.5\to4$ or $4.5\to5$ so no cop
needs to move.

\begin{enumerate}
\item If $y$ is guarding the purple hexagon, then by symmetry the situation
is exactly as in \case\ref{enu:halt}.
\item \label{enu:run2} If $y$ is guarding the yellow hexagon, then the
robber label for $y$ remains at $2$, so the cop label for $x$ must
be initially $7$ and $x$ moves $7\to5$. At this point, the cop
labels for $(x,y)$ cannot be $(5,3)$, otherwise the robber cannot
move $(4.5,2)\to(4,4)$ or $(4.5,2)\to(6,2)$, and so must continue
$(4.5,2)\to(5,0)\to(6,2)$, but the cops continue $(5,3)\to(5,1)$
and then do not need to move. Therefore the cop labels must be $(5,1)$,
so the robber cannot move $(4.5,2)\to(5,0)$ or $(4.5,2)\to(6,2)$,
and must continue $(4.5,2)\to(4,4)$, and the cops continue $(5,1)\to(5,3)$.
After that, as in the later half of \case\ref{enu:run1}, the robber
must continue on the path $4\to2\to0\to2\to4\to6\to8\to10$ of robber
labels for $x$, and then enter a proximal region of a hexagon that
is guarded by a cop in $S$ and just below the bottommost hexagon
in the diagram.
\end{enumerate}
\end{enumerate}
Therefore the robber must indefinitely repeat \case\ref{enu:run1}
or \case\ref{enu:run2}. But this is impossible, because it requires
three cops in $S$ guarding hexagons that are all adjacent to the
same hexagon (the bottommost one in the diagram).\end{proof}
\begin{rem*}
Incidentally, if three cops in $S$ guard hexagons that are all adjacent
to the same hexagon, then even if the robber is `confined' inside
the region around the central hexagon, the robber can indefinitely
repeat the path $5\to4.5\to4\to2\to0\to2\to4\to6\to8\to10\to12$ to
force those three cops to keep moving, which implies that the cops
cannot catch the robber unless they break out of this guarding pattern!
This is one reason it seems difficult to ascertain whether or not
$cop_{1}(\g)=4$.
\end{rem*}
The next lemma captures how we can \textbf{expand} strong guarding
of some hexagons to an extra hexagon (using an extra cop), while still
strongly guarding the original hexagons. Consequently, once the cops
have confined the robber to a region by strongly guarding some hexagons,
then the cops can keep the robber confined to that region while moving
an extra cop to strongly guard yet another hexagon, to confine the
robber even further.
\begin{lem}[Guard Expansion Lemma]
\label{lem:guard-expand} Take any set $H$ of distinct non-adjacent
hexagons, no three of which are adjacent to the same hexagon. And
take any game state, where some set $S$ of cops strongly guard all
the hexagons in $H$ except some hexagon $L$, and there is another
cop $x$ not in $S$. Then the cops can move in such a way that the
cops in $S$ still always strongly guard their hexagons (after their
turn) and yet eventually the cops in $S\cup\{x\}$ strongly guard
all the hexagons in $H$.\end{lem}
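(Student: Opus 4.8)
The plan is to keep the cops of $S$ guarding the hexagons $H\less\{L\}$ on autopilot via the \nameref{lem:hex-guard}, and to use the turns on which those cops need not move to walk the spare cop $x$ into a guarding position for $L$.

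Since $H\less\{L\}$ is contained in $H$, its hexagons are distinct, pairwise non-adjacent and have no three adjacent to a common hexagon, so the strengthened form of the \nameref{lem:hex-guard} applies to $S$: the cops can be moved so that $S$ strongly guards $H\less\{L\}$ after every cops' turn while, infinitely often, no cop of $S$ moves; call such a turn \emph{slack}. Two features of the definitions will be used. (i) If the robber lies outside every proximal region and red path of a hexagon $h$ guarded by a cop $c\in S$, then the robber has label $6$ with respect to $h$, so $c$'s deviation is $2d+1$ with $d$ the distance from $c$ to the nearest central vertex of $h$; being odd and (by the invariant) at most $2$, it equals $1$. Hence whenever the robber lies outside the proximal regions and red paths of all hexagons guarded by $S$, every such cop has deviation $1$, the invariant holds with nobody moving, and the turn is slack. (ii) Because $L$ is non-adjacent to each hexagon guarded by $S$, one checks from the construction of $\g$ that the proximal regions and red paths of $L$ are disjoint from those of every such hexagon; so whenever the robber lies in a proximal region or on a red path of $L$, it is outside the proximal regions and red paths of all hexagons guarded by $S$, and by (i) that turn is slack.

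On slack turns we move only $x$. First route $x$ along a shortest path toward a central vertex $z$ of $L$; if the next vertex of that path ever holds the robber then $x$ moves onto it and $C$ wins, so after finitely many slack turns we may assume $x$ sits at $z$, with infinitely many slack turns still ahead. Thereafter, on slack turns, $C$ has $x$ play the single-hexagon guarding strategy of \ref{sub:hex-guard} for $L$, which keeps $x$ inside the (small) red-outlined region of $L$ and moves it toward a central vertex whenever the robber has label $6$. Now split on the robber's behaviour. If from some turn on the robber always lies in a proximal region or on a red path of $L$, then by (ii) every turn is slack, so $x$ runs the guarding strategy on every turn; it therefore establishes and maintains deviation at most $2$, i.e.\ guards $L$, and at that point $S\cup\{x\}$ strongly guards $H$ — distinct non-adjacent hexagons, with at most the single cop $x$ having deviation above $1$, since by (i) all the cops of $S$ have deviation $1$. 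In the remaining case the robber has label $6$ with respect to $L$ on infinitely many slack turns: indeed, if only finitely many slack turns had the robber at label $6$, then eventually every slack turn would place the robber in $L$'s neighbourhood, hence by (ii) all cops of $S$ would have deviation $1$ and \emph{every} turn would be slack, forcing the robber into $L$'s neighbourhood on every turn — a contradiction. On each of these slack turns the guarding strategy moves $x$ toward a central vertex of $L$, so, $x$ being confined to the small red-outlined region, $x$ eventually sits at a central vertex with the robber at label $6$; there $x$ has deviation $2\cdot0+1=1$ while $S$ strongly guards $H\less\{L\}$, so $S\cup\{x\}$ strongly guards $H$ right afterward.

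I expect the real obstacle to be the pursuit underlying the last paragraph: showing that a cop free to move on every turn, starting from a central vertex of $L$, can force its deviation down to at most $2$ and hold it there against a robber that stays within $L$'s neighbourhood but is free to cross the hexagon, and — in the other case — that the robber's repeated excursions to label $6$ really do let $x$ edge back to a central vertex rather than perpetually pulling it away. This should fall out of a bounded case analysis on the vertices ringing $L$, much like (but simpler than) the closing part of the proof of the \nameref{lem:hex-guard}: track $x$'s deviation, observe that each robber move raises it by at most $2$ while each $x$-move replaces it by $|d-2|$, and use the finiteness of $L$'s neighbourhood to preclude the robber keeping $x$'s deviation above $2$ forever without being caught or reaching label $6$. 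Everything else is routine bookkeeping with the definitions of \emph{deviation} and \emph{strong guarding}.
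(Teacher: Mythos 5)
Your reduction to ``slack turns'' rests on two geometric claims, (i) and (ii), and both fail. For (i): the strong-guarding invariant bounds deviations only \emph{after the cops' turn}; immediately after a robber move the deviation of a cop in $S$ can exceed $2$ --- that is exactly when the cop must move. In particular, if the robber has just stepped out of the proximal region of a hexagon guarded by $c\in S$ while $c$ was sitting on the red path at distance $d\ge1$ from a central vertex, the robber now has label $6$ with respect to that hexagon but $c$'s deviation is $2d+1\ge3$, so $c$ must move and the turn is not slack. For (ii): proximal regions of non-adjacent hexagons are \emph{not} disjoint in $\g$; the tables in the paper's proof of the \nameref{lem:hex-guard} track robber label pairs such as $(5,0)$ and $(4,4)$ with respect to two distinct guarded (hence non-adjacent) hexagons, i.e.\ the robber sits in two proximal regions at once. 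So the robber can camp near $L$ while repeatedly disturbing a cop of $S$, and your spare cop $x$ is never handed a free move by this criterion.

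Even granting enough slack turns, the step you yourself flag as ``the real obstacle'' is the actual content of the lemma and is left unproven: you need a cop starting cold (deviation possibly $5$ or more) to drive its deviation down to at most $2$ against a robber already entrenched in a proximal region of $L$, and your accounting (``each robber move raises it by at most $2$ while each $x$-move replaces it by $|d-2|$'') only yields a stalemate, not convergence. The paper avoids this pursuit entirely with a guide device: it places a phantom cop $x'$ at a vertex of the red path of $L$ at which $S\cup\{x'\}$ would \emph{already} strongly guard $H$, lets $x'$ participate as a virtual cop in the strong-guarding dynamics of the \nameref{lem:hex-guard} (so $x'$ is a correct guard at all times and moves at most one step per cops' turn), and has the real cop $x$ step toward $x'$ precisely on the turns when $x'$ moves or no cop needs to move. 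The distance from $x$ to $x'$ is then non-increasing and strictly decreases infinitely often, so $x$ eventually coincides with its guide and the guarding transfers to $x$. Some such device, replacing both your slack-turn bookkeeping and the unproved cold-start pursuit, is what the argument needs.
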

\begin{proof}
We start by gradually moving $x$ to a central vertex of $L$ while
maintaining the invariant that the cops in $S$ strongly guard their
hexagons (after their turn), by the \nameref{lem:hex-guard}. After
that, if the robber is not in a proximal region of $L$, then the
cops in $S\cup\{x\}$ strongly guard all the hexagons in $H$ and
we are done. But if the robber is in a proximal region of $L$, then
there is a vertex $v$ on the corresponding red path (see the diagram
in \ref{sub:hex-guard}) such that $S\cup\{x\}$ would strongly guard
all the hexagons in $H$ if $x$ is at $v$. Place a guide $x'$ at
$v$. From then on, after each robber's turn, we perform the following
steps:
\begin{enumerate}
\item If $S\cup\{x'\}$ does not strongly guard their hexagons (treating
$x'$ as an actual cop), move one cop/guide in $S\cup\{x'\}$ so that
they (again) strongly guard their hexagons, by the \nameref{lem:hex-guard}.
Otherwise move nothing.
\item If in step~1 we moved $x'$ or nothing at all, then move $x$ towards
$x'$ (if it is not already at the same vertex).
\end{enumerate}
Note that this yields valid moves because on each cops' turn we move
only one cop. Also, the distance from $x$ to $x'$ (measured after
the cops' turn) never increases (since $x$ is moved whenever $x'$
is moved), and it decreases repeatedly until it is zero because infinitely
often no cop/guide is moved in step~1, by the \nameref{lem:hex-guard}
again. Thus eventually $x$ is at the same vertex as $x'$ and hence
the cops in $S\cup\{x\}$ strongly guard all the hexagons in $H$.
\end{proof}

\subsection{The Winning Strategy}

\scalegraphics{.63}

\begin{minipage}[t]{0.56\columnwidth}%
Now we are ready to present the winning strategy for $7$ cops. Identify
$8$ of the hexagons underlying $\g$ whose centres form a cube, and
divide them into \textbf{$6$ red} and $2$ \textbf{yellow} hexagons,
where the yellow hexagons are opposite the centre of the cube (as
in the diagram on the right). Note that no three of the $6$ red hexagons
are adjacent to the same hexagon.

Let $S$ be a set of $6$ of the cops. At the start, place each cop
in $S$ at a central vertex of a different red hexagon, and place
the $7$th cop anywhere. Once the opponent has placed the robber,
let $T$ be the set of cops in $S$ with deviation at most $1$, and
note that the cops in $T$ strongly guard their hexagons. The idea
is roughly to expand strong guarding from those hexagons to all red
hexagons to confine the robber to `half of the cube', and then keep
the robber there while expanding also to the yellow hexagon in that
`half' to further confine the robber to a `square of the cube'.
After that, we use the corresponding $4$ cops to continue strongly
guarding that `square' while moving $2$ other cops to divide the
confinement region in half, and then gradually reduce it further.%
\end{minipage}\hfill{}%
\begin{minipage}[t]{0.42\columnwidth}%
\begin{centerbox}
\begin{roundedbox}
\begin{centerbox}
\includegraphics[bb=0bp 0bp 302bp 302bp]{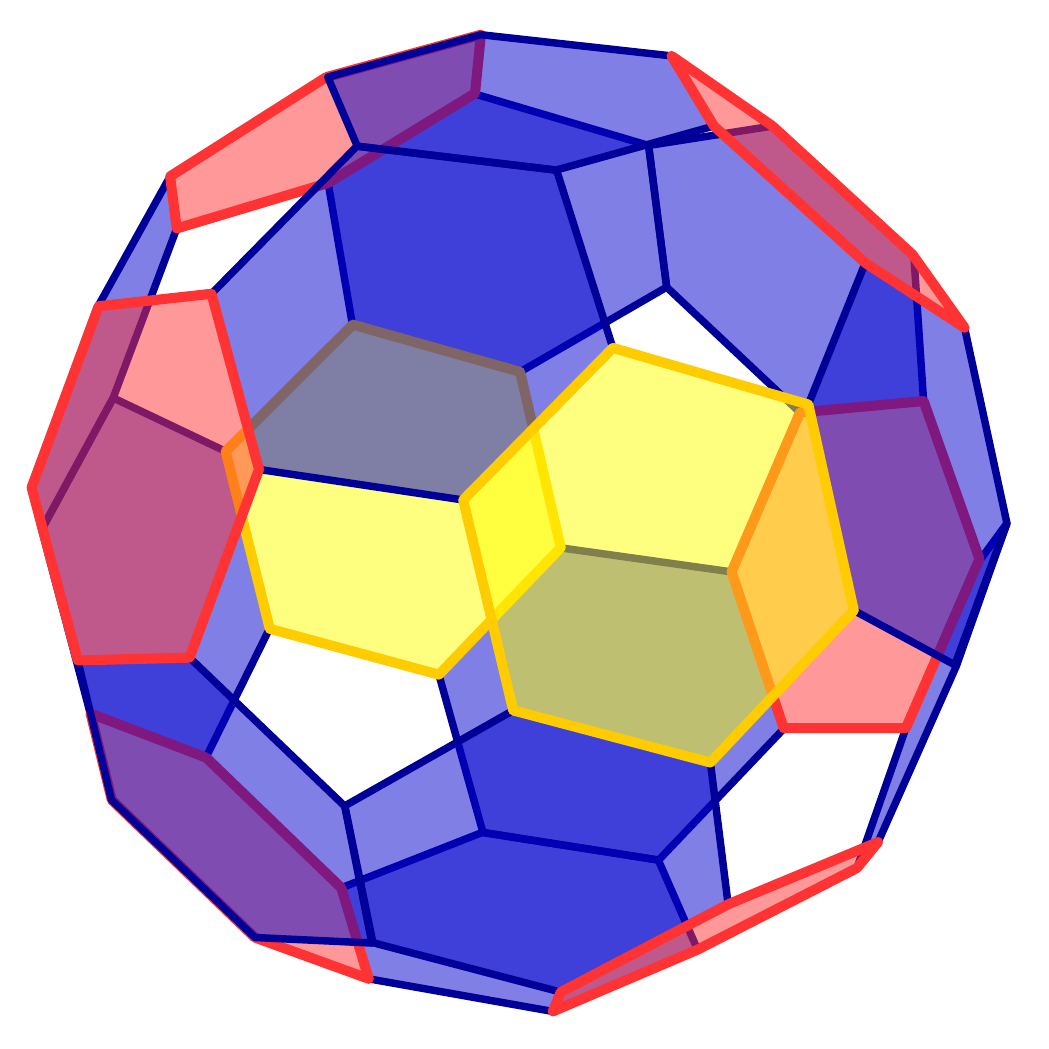}\end{centerbox}
\end{roundedbox}
\end{centerbox}
\end{minipage}

We shall now go into the details of how to move the cops.

\begin{minipage}[t]{0.56\columnwidth}%

\subsubsection{Movement Phase 1}

Move the cops in $S$ to eventually strongly guard the red hexagons,
by applying the \nameref{lem:guard-expand} (\ref{lem:guard-expand})
to expand strong guarding from the hexagons guarded by the cops in
$T$ to all the red hexagons, one hexagon at a time. After this, the
robber will be confined to one of the two possible `halves of the
cube' on either `side' of the `ring' of red hexagons, where one
`side' is represented on the right by the coloured hexagons. (Of
course, the robber is confined to only one `side' of each red hexagon.)

Note that, during this phase, we do not care if the robber `escapes'
past any of the vertices that the cops are eventually supposed to
guard. All that matters is that after finitely many moves, these $6$
cops strongly guard their red hexagons and hence the robber will be
confined to one `side' of that `ring' of hexagons.%
\end{minipage}\hfill{}%
\begin{minipage}[t]{0.42\columnwidth}%
\begin{centerbox}
\begin{roundedbox}
\begin{centerbox}
\includegraphics[bb=0bp 0bp 302bp 302bp]{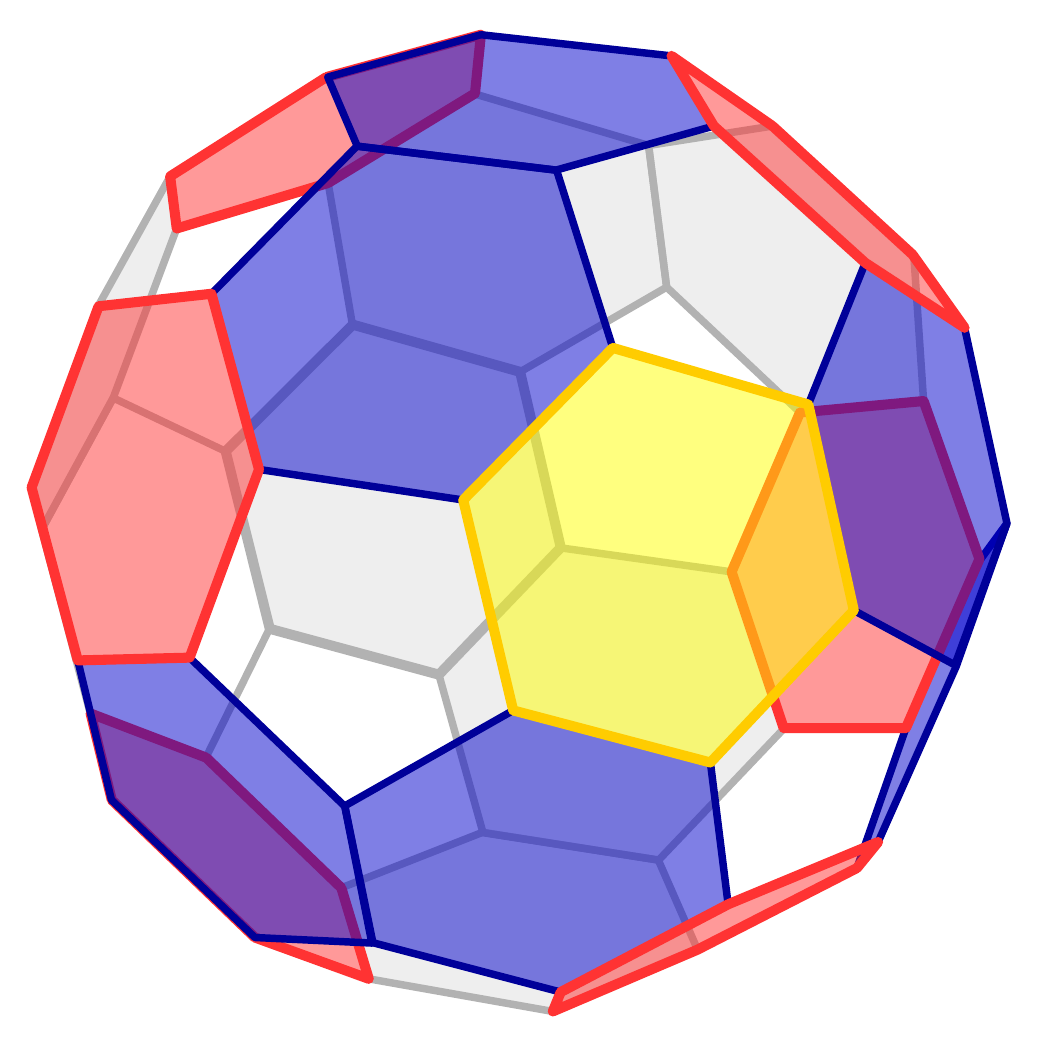}\end{centerbox}
\end{roundedbox}
\end{centerbox}
\end{minipage}

\begin{minipage}[t]{0.56\columnwidth}%

\subsubsection{Movement Phase 2}

Keep the robber confined to its current `half of the cube', while
moving the remaining $7$th cop to `expand' strong guarding to the
yellow hexagon in that `half', again by the \nameref{lem:guard-expand}
(\ref{lem:guard-expand}). After this, the robber will be confined
to one of three possible `squares of the cube', represented on the
right by the coloured hexagons (the other cases are symmetric).

Now keep the robber confined to its current `square', using the
$4$ cops that had been strongly guarding the red/yellow hexagons,
while moving $2$ of the other cops to the key vertices shared by
blue hexagons (as indicated by the pink circles), by the \nameref{lem:hex-guard}
(\ref{lem:hex-guard}). After this, the robber will be confined to
`half of that square' or between the pink-circled key vertices.%
\end{minipage}\hfill{}%
\begin{minipage}[t]{0.42\columnwidth}%
\begin{centerbox}
\begin{roundedbox}
\begin{centerbox}
\includegraphics[bb=0bp 0bp 302bp 302bp]{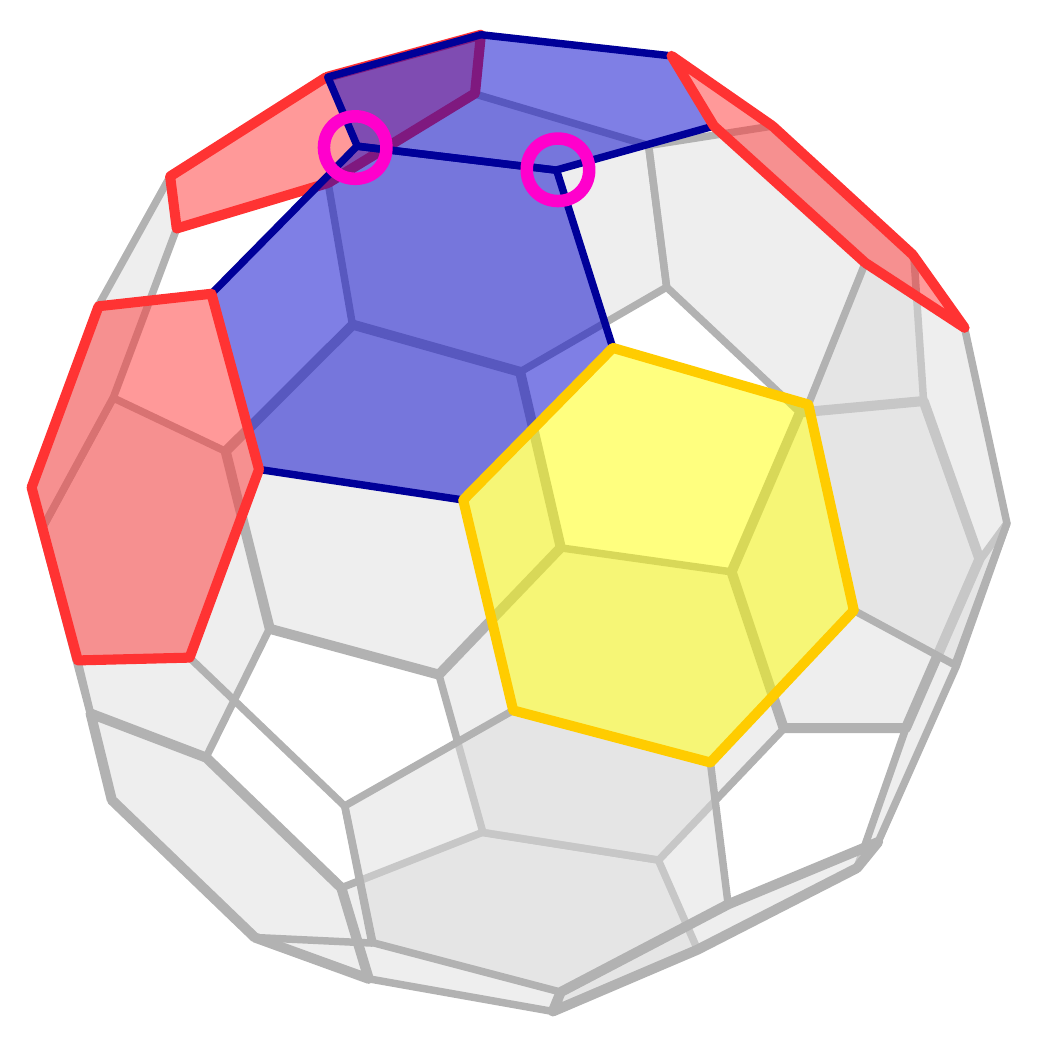}\end{centerbox}
\end{roundedbox}
\end{centerbox}
\end{minipage}

\scalegraphics{.455}

\begin{minipage}[t]{0.49\columnwidth}%

\subsubsection{Movement Phase 3}

More precisely, we can assume that the robber never moves next to
a cop, so the robber is now confined to one of the green-dotted regions
in the diagram on the right, with $2$ \textbf{red }cops strongly
guarding the red hexagons (one on each red path) and $2$ \textbf{pink}
cops $p,q$ shown as solid pink circles with $p$ on the left. If
the robber is in-between the pink cops, we can trivially move a third
cop to catch the robber. Otherwise we maintain strong guarding of
the red hexagons, while moving a $5$th cop to the pink-circled vertex,
by the \nameref{lem:hex-guard} (\ref{lem:hex-guard}).%
\end{minipage}\hfill{}%
\begin{minipage}[t]{0.49\columnwidth}%
\begin{centerbox}
\begin{roundedbox}
\begin{centerbox}
\includegraphics[bb=0bp 0bp 504bp 360bp]{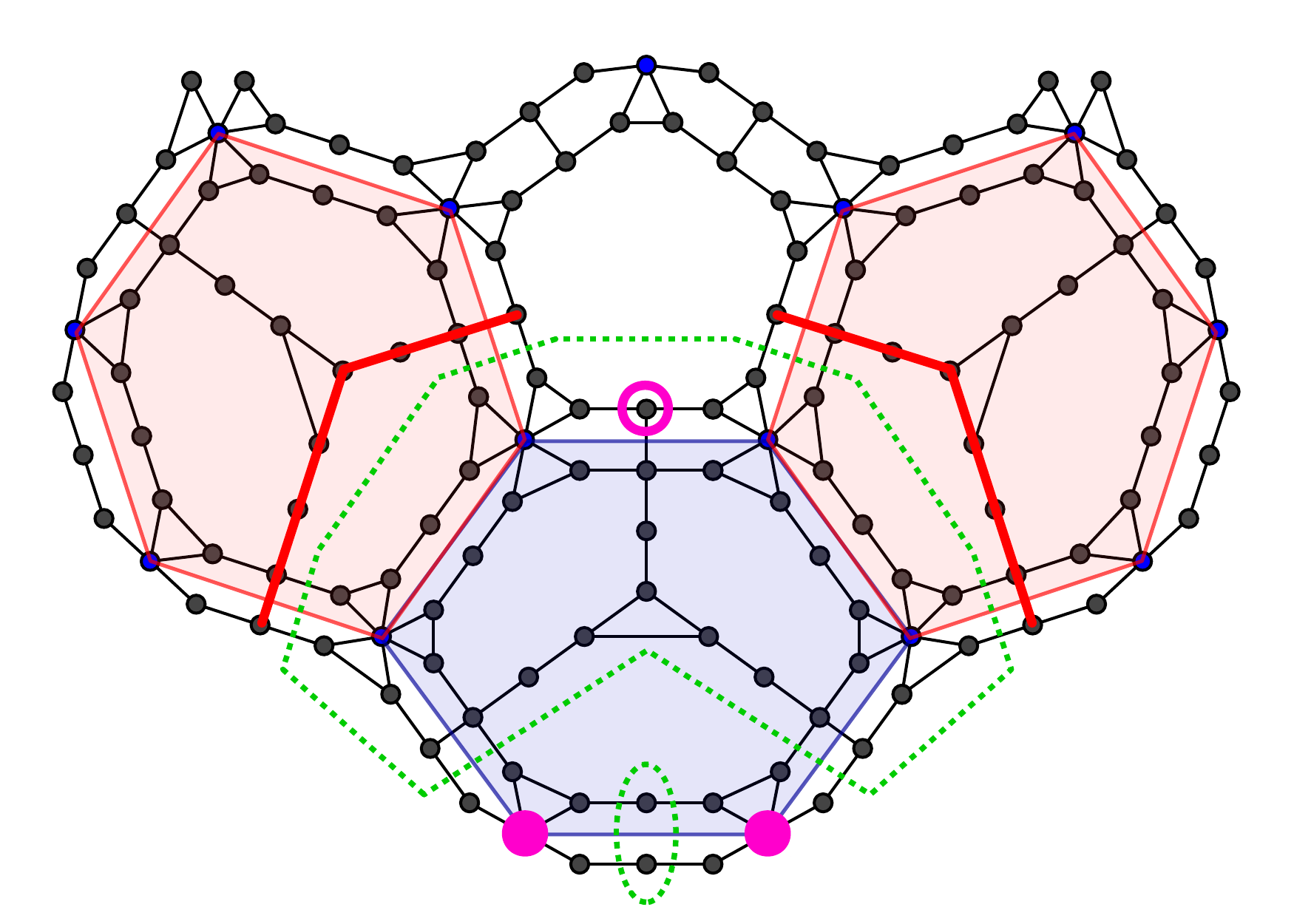}\end{centerbox}
\end{roundedbox}
\end{centerbox}
\end{minipage}

\begin{minipage}[t]{0.49\columnwidth}%
Maintain strong guarding of the red hexagons, until we do not have
to move any red cop, by the \nameref{lem:hex-guard} (\ref{lem:hex-guard}).

By left-right symmetry we can assume that the robber is at this point
in the green-dotted region as shown on the right. Move the pink cop
$q$ one step along the pink path, and continue gradually moving it
along the path while maintaining guarding of the left red hexagon.
Observe that after that first step along that path, there is no need
to guard the right red hexagon anymore, and that on each subsequent
step, the robber is confined to a smaller region.%
\end{minipage}\hfill{}%
\begin{minipage}[t]{0.49\columnwidth}%
\begin{centerbox}
\begin{roundedbox}
\begin{centerbox}
\includegraphics[bb=0bp 0bp 504bp 360bp]{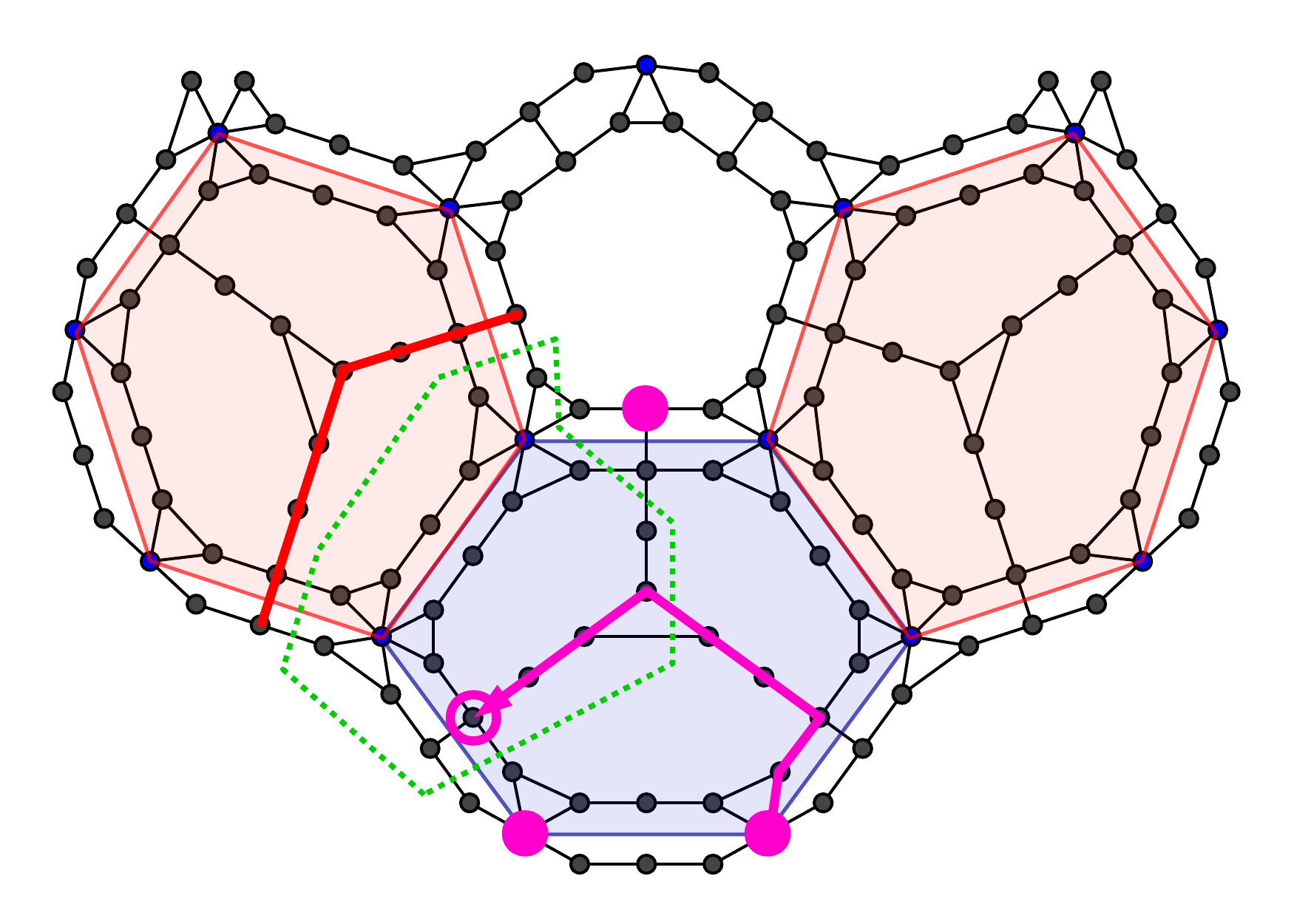}\end{centerbox}
\end{roundedbox}
\end{centerbox}
\end{minipage}

\begin{minipage}[t]{0.49\columnwidth}%
Finally, as shown on the right, gradually move the pink cop $p$ along
the given path (labelled ``1''), followed by the pink cop $q$ along
the given path (labelled ``2''), all the while maintaining guarding
of the left red hexagon, and the robber will be caught.%
\end{minipage}\hfill{}%
\begin{minipage}[t]{0.49\columnwidth}%
\begin{centerbox}
\begin{roundedbox}
\begin{centerbox}
\includegraphics[bb=0bp 0bp 504bp 360bp]{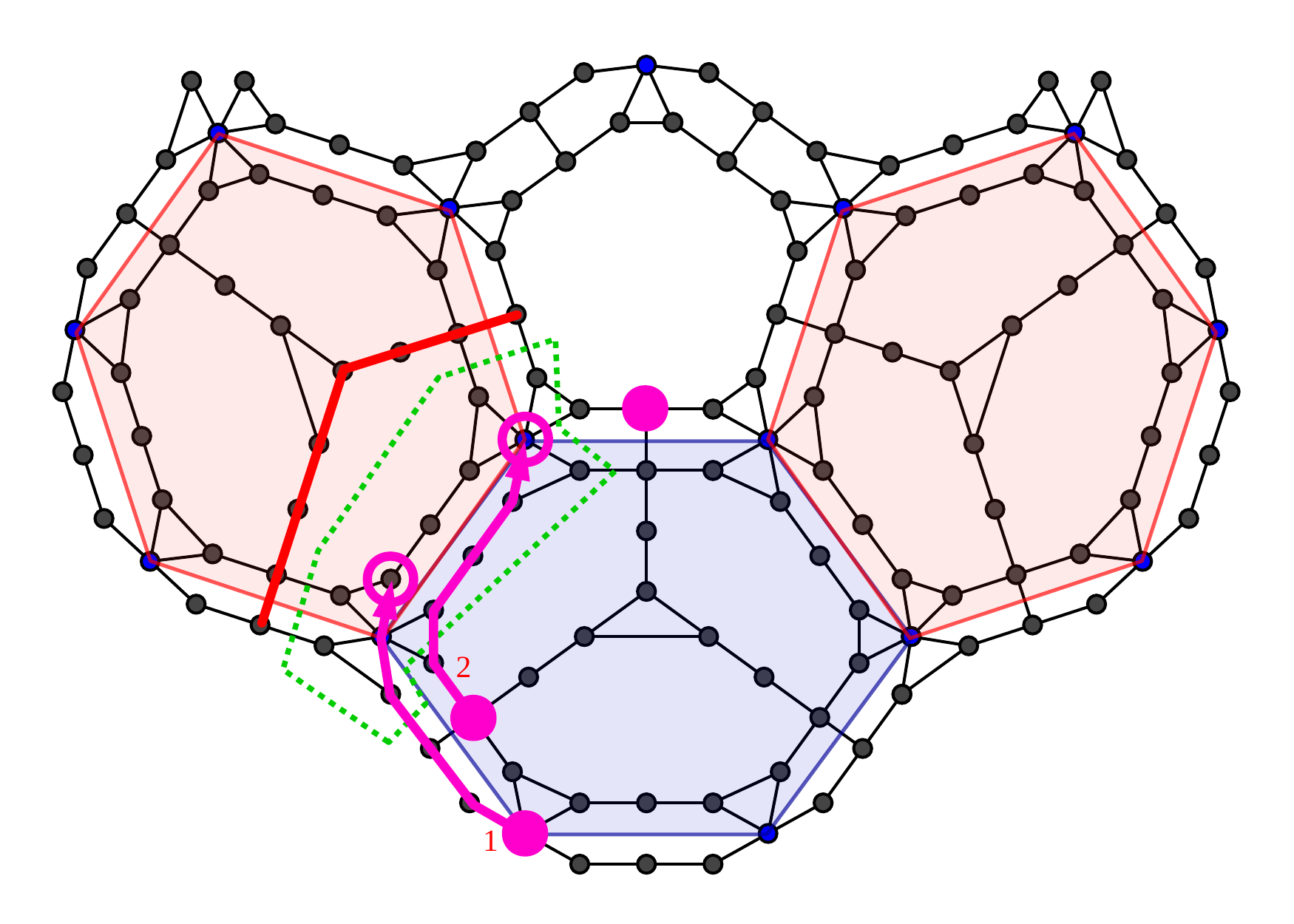}\end{centerbox}
\end{roundedbox}
\end{centerbox}
\end{minipage}

\section{1-Cop-Move Number 4 on a Small Graph}

Although it seems difficult to find a strategy for fewer cops to win
on $\g$, we can easily show that there is some connected (planar)
subgraph $\h$ of $\g$ such that $cop_{1}(\h)=4$. This follows readily
from the following lemma.
\begin{lem}[Vertex Guard Lemma]
\label{lem:vertex-guard} Take any graph $G$ and a vertex $v$ in
$G$. Then $cop_{1}(G)\le cop_{1}(G-v)+1$.\end{lem}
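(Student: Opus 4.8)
The plan is to use one extra cop as a stationary \emph{guard} that permanently occupies $v$, thereby reducing the $1$-cop-move game on $G$ to the one on $G-v$. Write $c=cop_{1}(G-v)$ and fix a winning strategy $\sigma$ for $c$ cops in the $1$-cop-move game on $G-v$. On $G$ we play with $c+1$ cops: designate one cop $g$ as the guard, and in the opening place $g$ at $v$ and the remaining $c$ cops at the vertices prescribed by the opening of $\sigma$ (all of which lie in $G-v$). When the robber is then placed, it must pick an unoccupied vertex, so it starts in $G-v$.

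The invariant maintained after every cops' turn is: $g$ is still at $v$, and the positions of the other $c$ cops coincide with what $\sigma$ would prescribe against a ``shadow'' robber sitting at the real robber's current vertex. This is consistent because we may assume the robber never moves onto, or next to, an occupied vertex (otherwise a cop catches it on the following turn); in particular it never moves to $v$, so every edge it traverses is an edge of $G-v$, and hence the entire sequence of the robber's moves is a legal robber play in $G-v$. The shadow robber simply copies this play, and we let the $c$ non-guard cops respond exactly as $\sigma$ dictates. Since $\sigma$ is a $1$-cop-move strategy it moves at most one cop, and $g$ never moves, so each combined move is legal in the $1$-cop-move game on $G$ and the invariant is preserved. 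Because $\sigma$ is winning, after finitely many turns some non-guard cop occupies the shadow robber's vertex, which is the real robber's vertex; the cops have then won on $G$. Hence $c+1$ cops win on $G$, i.e. $cop_{1}(G)\le cop_{1}(G-v)+1$.

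There is no serious obstacle here; the only point needing care is the reduction of a robber play in $G$ to a robber play in $G-v$. One must check that the guard $g$ genuinely never needs to move (true, since its only task is to occupy $v$), that the robber gains nothing by approaching $v$ (immediate, as $g$ is immobile), and that the shadow robber faithfully mirrors every option legally available to the real robber — equivalently, that removing $v$ deletes no edge the robber could safely have used. Reconciling the exact game conventions (a cop catching the robber as soon as the two are co-located or adjacent on the robber's turn, and the robber being placed off the cops) is entirely routine, so the proof is short.
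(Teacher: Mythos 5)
Your proof is correct and takes exactly the paper's approach: park one extra cop at $v$ forever, so the robber is confined to $G-v$, and run a winning $cop_{1}(G-v)$-strategy with the remaining cops. The paper's own proof is a two-line version of the same argument; your shadow-robber bookkeeping just makes the reduction explicit.
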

\begin{proof}
Let $c=cop_{1}(G-v)$. Then we can use $1$ cop to guard $v$ by staying
there without moving, forcing the robber to never move to $v$, and
hence we can use $c$ other cops to catch the robber on the graph
$G-v$.\end{proof}
\begin{thm}
There is a connected planar graph $G$ with at most $720$ vertices
such that $cop_{1}(G)=4$.\end{thm}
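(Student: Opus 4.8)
The plan is to combine the two main components already established in the paper: the lower bound $cop_1(\g)>3$ from Section~3 (the robber's evasion strategy against $3$ cops), the upper bound $cop_1(\g)\le 7$ from Section~4 (the $7$-cop winning strategy), and the \nameref{lem:vertex-guard}. The idea is to delete vertices from $\g$ one at a time and track how the $1$-cop-move number changes, using the monotonicity provided by the \nameref{lem:vertex-guard} in one direction and a trivial lower bound in the other.

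First I would note that $cop_1(\g)\ge 4$ but $cop_1(\g)\le 7$, so the parameter lies strictly above $3$. Now enumerate the vertices of $\g$ as $w_1,\dots,w_{720}$ and set $\g_0=\g$, $\g_j=\g_{j-1}-w_j$ for $j\ge 1$ (restricting to, say, the connected component containing some fixed reference vertex so as to stay within the class of connected graphs, or simply deleting only vertices whose removal keeps the graph connected — one can always find such a vertex, e.g.\ a non-cut vertex, as long as the graph has at least one edge). By the \nameref{lem:vertex-guard}, $cop_1(\g_{j-1})\le cop_1(\g_j)+1$, i.e.\ $cop_1$ drops by at most $1$ with each deletion. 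On the other hand, once the graph is reduced to a single vertex (or an edge, or a tree, or any cop-win graph) the $1$-cop-move number is $1\le 3$. Hence along this sequence the value starts at $\ge 4$ and ends at $\le 3$, and since consecutive terms differ by at most $1$, there must be some index $j$ with $cop_1(\g_{j-1})\ge 4$ and $cop_1(\g_j)\le 4$; but then $cop_1(\g_{j-1})\le cop_1(\g_j)+1\le 5$ — this only pins it between $4$ and $5$, so a little more care is needed.

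To get exactly $4$, I would instead argue via a discrete intermediate value principle more carefully: define $f(j)=cop_1(\g_j)$. We have $f(0)\ge 4$, and $f(N)\le 3$ for $N$ large enough (when $\g_N$ becomes a cop-win graph). Since $|f(j)-f(j-1)|$... — wait, the \nameref{lem:vertex-guard} only bounds the drop, not the rise. So reverse direction: going \emph{backward} from a small graph to $\g$, each step of \emph{adding} a vertex increases $cop_1$ by at most... no, that is not given either. The correct observation is just $f(j-1)\le f(j)+1$. So whenever $f(j)\le 3$ we get $f(j-1)\le 4$. Take the \textbf{smallest} $j$ with $f(j)\le 3$; then $f(j-1)\ge 4$ (by minimality, since $f(0)\ge 4$ forces $j\ge 1$ and $f(j-1)\ge 4$) and $f(j-1)\le f(j)+1\le 4$, hence $f(j-1)=4$. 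Setting $G=\g_{j-1}$, which is connected (by our choice of deletions), planar (subgraph of planar), and has at most $720$ vertices, gives $cop_1(G)=4$ as required.

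The main obstacle — really the only subtlety — is ensuring the graph stays \emph{connected} throughout the deletion sequence, so that $G$ lies in the stated class; this is handled by always deleting a non-cut vertex (which exists in any finite graph with at least one edge, e.g.\ a leaf of a spanning tree), and observing that a connected graph with at least one edge and hence at least two vertices still has $cop_1\ge 1$, while once we are down to a graph that is a tree (or more generally cop-win, such as any graph on $\le 3$ vertices) we certainly have $cop_1\le 3$, so the sequence does cross the threshold. One should also remark that $cop_1$ is well-defined and equals $1$ for cop-win graphs, which includes all sufficiently small connected graphs, guaranteeing $f(N)\le 3$ for $N=717$ or so. I would present this as a short proof: quote the bounds $4\le cop_1(\g)\le 7$, invoke the \nameref{lem:vertex-guard}, run the deletion argument with the smallest-crossing-index trick, and conclude.
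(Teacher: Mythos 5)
Your proof is correct and follows essentially the same route as the paper: delete non-cut vertices one at a time to stay connected, use the \nameref{lem:vertex-guard} to bound the drop in $cop_{1}$ by $1$ per deletion, and locate the first crossing of the threshold via minimality. The only cosmetic difference is that you take the smallest $j$ with $cop_{1}(\g_{j})\le 3$ and output $\g_{j-1}$, whereas the paper takes the smallest $m$ with $cop_{1}(G_{m})\le 4$ and outputs $G_{m}$; both correctly pin the value at exactly $4$.
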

\begin{proof}
Let $G_{0}=\g$ and let $n$ be the number of vertices in $G_{0}$.
For each $k\in[1..n-1]$ let $G_{k}=G_{k-1}-v_{k}$ where $v_{k}$
is a vertex in $G_{k-1}$ that is not a cut vertex (i.e.~$G_{k}$
is still connected). Clearly $cop_{1}(G_{n-1})=1\le4$, so there is
some minimum $m\in[0..n-1]$ such that $cop_{1}(G_{m})\le4$, If $m=0$,
then $cop_{1}(G_{m})=cop_{1}(\g)\ge4$. Otherwise if $m>0$, then
$cop_{1}(G_{m-1})>4$ and hence $cop_{1}(G_{m})\ge4$ by the \nameref{lem:vertex-guard}.
In either case, $cop_{1}(G_{m})=4$ and $G_{m}$ is a connected (planar)
subgraph of $\g$.
\end{proof}

\section{Open Questions}

It is not clear what the true value of $cop_{1}(\g)$ is, and it would
be very interesting if it was more than $4$, because then the robber's
winning strategy against $4$ cops would have to be very different
from the one given in this paper against $3$ cops. One also hopes
that we will eventually find an explicit simpler and smaller finite
connected planar graph with $1$-cop-move number exactly $4$, and
get a better understanding of whether finite connected planar graphs
have bounded $1$-cop-move number or not.

\section*{}

\phantomsection\bibliographystyle{plain}
\addcontentsline{toc}{section}{\refname}\bibliography{OCMG}

\end{document}